\theoremstyle{plain}
\theoremstyle{plain}
\def\bea{\begin{eqnarray}}
\def\eea{\end{eqnarray}}
\def\ba{\begin{array}}
\def\ea{\end{array}}
\def\beq{\begin{equation}}
\def\eeq{\end{equation}}
\def\Tr{\operatorname{Tr}}
\def\({\left(}
\def\){\right)}
\def\[{\left[}
\def\]{\right]}
\newtheorem{proposition}{Proposition}
\begin{document}
%\title{Athermality vs non-Markovianity: Illustration of perturbation-enhanced non-Markovianity}

%\title{Non-Markovianity enhanced by perturbative athermality}

%\title{Non-Markovianity vs athermality: perturbation-induced advantage}

\title{Non-Markovianity vs athermality: perturbation-enhanced information backflow}

\author{Anindita Sarkar}
\affiliation{Harish-Chandra Research Institute,  A CI of Homi Bhabha National Institute, Chhatnag Road, Jhunsi, Prayagraj  211019, India}
\author{Debarupa Saha}
\affiliation{Harish-Chandra Research Institute,  A CI of Homi Bhabha National Institute, Chhatnag Road, Jhunsi, Prayagraj  211019, India}
\author{Ujjwal Sen}
\affiliation{Harish-Chandra Research Institute,  A CI of Homi Bhabha National Institute, Chhatnag Road, Jhunsi, Prayagraj  211019, India}
\begin{abstract}
Non-Markovianity and athermality are 
useful resources in quantum technologies, and it is therefore important to understand the relations between the two, for general quantum dynamics. We propose three measures of non-Markovianity, first within the ambit of thermal operations, and then beyond it, that result from 
%We also extend our proposed measures to the set of approximate thermal operations, which are a class of athermal operations resulting due to 
unavoidable perturbations in system's Hamiltonian and that leads to  violations of conservation of total energy characterizing any thermal operation. The proposed measures are based respectively on  system-environment entanglement, total correlation in the system-environment partition, and on a concept of distance defined on the sets of usual  and approximate thermal operations. We investigate the response of non-Markovianity to the athermality-inducing perturbations, using all the three measures. For the entanglement and  distance-based measures, we derive upper bounds on the response by a quantity that depends on the perturbative  Hamiltonian. %difference in non-Markovianity in the athermal case to that of the thermal case. 
For the total correlation-based measure, we are able to compute the exact response. We present examples of qubit-qubit and qubit-qutrit systems for which perturbation leads to enhancement of non-Markovianity, as quantified by the entanglement and total correlation-based measures.
%, considering qubit-qubit and qubit-qutrit system-environment respectively, where the perturbation in the system increases the non-Markovianity in the dynamics, compared to the unperturbed case.  
%This indicates that athermality can enhance non-Markovianity.
 
\end{abstract}
\maketitle
\section{\MakeUppercase{Introduction}}
 The theory of open quantum systems~\cite{lidar2020lecturenotestheoryopen,Rivas_2012,10.1093/acprof:oso/9780199213900.001.0001}, describes the evolution of a quantum system, while it interacts with an external environment. Such evolutions  can be Markovian or non-Markovian~\cite{PhysRevA.90.052118,CHANDA20161,PhysRevLett.120.060406,PhysRevLett.118.080404,PhysRevLett.118.120501,PhysRevLett.118.050401,PhysRevLett.119.190401,PhysRevLett.117.050403,PhysRevLett.99.160502,PhysRevLett.104.250401,doi:10.1142/S1230161218500142,PhysRevA.95.012122,PhysRevA.96.052125,Vacchini_2011,RevModPhys.88.021002,RevModPhys.89.015001}. Markovian evolutions are quintessential to scenarios in which the interaction between the system and the environment is very weak. As a consequence, the system and the environment remains uncorrelated throughout the evolution. There is only a unidirectional flow of information from the system to the environment. The environment possesses no memory about the initial state of the system and therefore, no backflow of information is plausible in a Markovian evolution. On the other hand, non-Markovian evolutions are the result of strong interactions between the system and the environment, due to which they may become correlated during the evolution. The environment retains memory about the states of the system, and the flow of information is bidirectional. In fact, information backflow from environment to system is a key feature of non-Markovian evolutions.
 
 %if there is no information backflow from the environment to the system , i.e. the system has no memory of its previous states. This means that the correlation present between the system and environment is extremely weak, and hence combined system+environment state can be considered as separable. In contrast, non-Markovian
 %retain memory of their past states, and there is an information backflow. 
Non-Markovianity has proven to be a vital resource in various fields of quantum information and computation. Non-Markovianity can help in assisting steady-state entanglement formation~\cite{PhysRevLett.108.160402}. Entangled states can metrologically outperform uncorrelated states with same resource using non-Markovian dynamics~\cite{PhysRevLett.109.233601} and they can surpass the standard quantum limit under non-Markovian noise~\cite{PhysRevA.84.012103}. Harnessing non-Markovianity can enhance capacities of quantum channels~\cite{bylicka2013nonMarkovianityresourcequantumtechnologies}. Perfect teleportation can be made possible with the help of non-Markovianity even with mixed initial states formed due to decoherence~\cite{Laine2014}. In addition to these, non-Markovianity can be fundamentally useful in increasing the security of continuous-variable key distribution protocols~\cite{PhysRevA.83.042321}, improving the accuracy in simulating quantum dynamics~\cite{chen2023nonmarkovianitybenefitsquantumdynamics}, enhancing thermodynamic work extraction~\cite{Bylicka2016}, providing communication advantages in quantum switches~\cite{PhysRevA.106.052410} and improving the efficiency in teleporting quantum correlated states~\cite{Motavallibashi:21}. 
This versatility of non-Markovian quantum evolutions motivates the  quantification of non-Markovianity of generic open system evolutions, in as many ways as possible, and also to relate it with other physical quantities.

In the current literature, a number of measures exist for the quantification of non-Markovianity. Examples include those 
based on trace distance~\cite{PhysRevLett.103.210401,PhysRevA.81.062115,PhysRevA.92.042108,RevModPhys.88.021002}, fidelity~\cite{PhysRevA.82.042107,PhysRevA.84.052118}, divisibility of the dynamical map~\cite{PhysRevLett.105.050403,PhysRevA.83.052128}, quantum mutual information~\cite{PhysRevA.86.044101} and quantum Fisher information~\cite{PhysRevA.82.042103}. For other measures, see references~\cite{PhysRevLett.101.150402,PhysRevA.83.062115,Breuer_2012,PhysRevA.87.042103,PhysRevA.88.020102,PhysRevLett.112.120404,PhysRevA.89.042120,PhysRevA.89.052119,PhysRevLett.112.210402,Bylicka2014,Rivas_2014,PhysRevA.90.042101,PhysRevA.91.032115,PhysRevLett.116.020503,PhysRevA.93.022117,He_2017,PhysRevA.96.022106,LI20181,PhysRevA.99.012303,anand2019quantifyingnonMarkovianityquantumresourcetheoretic,PhysRevA.100.012120,UTAGI2021126983,10.3389/frqst.2023.1134583}. 

Taking into account the benefits of non-Markovianity as a resource, we explore non-Markovianity in the context of a special type of operations called
thermal operations ($\mathbb{TO}$)~\cite{janzing2000thermodynamiccostreliabilitylow,PhysRevLett.111.250404,Horodecki_2013,rf} and approximate thermal operations ($\mathbb{TO_{\epsilon}})~$\cite{one}. These sets of operations form  important and physically relevant subclasses within generic open quantum evolutions. $\mathbb{TO}$ forms the class of free operations in the resource theory of thermodynamics~\cite{PhysRevLett.111.250404,doi:10.1142/S0217979213450197,Ng2018,Lostaglio_2019}. $\mathbb{TO}$ on a quantum system is generated by the action of energy-preserving unitaries on the combined system-environment state, followed by tracing out the environment, whereas $\mathbb{TO}_{\epsilon}$ is generated by leveraging the energy-preserving feature of the unitaries corresponding to $\mathbb{TO}$. Such an evolution may result from small perturbations in the system Hamiltonian. The degree of perturbation is denoted by $\epsilon$.

We have quantified non-Markovianity for both the types of operations, 
%essentially 
using three measures. The first measure is based on the amount of entanglement created between the system and the environment  for both $\mathbb{TO}$ and $\mathbb{TO}_{\epsilon}$. 
As the second  measure, we consider the amount of  quantum mutual information generated between the system and environment during the evolution via $\mathbb{TO}$ or $\mathbb{TO}_{\epsilon}$. Our third measure is distance-based, and is given by the trace norm of the difference of a state obtained by applying any map in the set of $\mathbb{TO}$/$\mathbb{TO}_{\epsilon}$, with its closest state obtained by applying Markovian thermal operations/approximate Markovian thermal operations ($\mathbb{MTO}$/$\mathbb{MTO}_\epsilon$), maximized over all possible input states. 

%which is a subset of $\mathbb{TO}$/$\mathbb{TO}_{\epsilon}$. 
%Note that for Markovian operations all of these measures give a value zero. Thus any positive finite value of these measures suggests that the corresponding operation is non-Markovian.
It is particularly interesting to examine the response of the non-Markovianity to the introduction of perturbation in the system. We determine this analytically by computing the difference, for all the non-Markovianity measures, corresponding to $\mathbb{TO}$$_\epsilon$ with their $\mathbb{TO}$ counterparts. Using perturbation theory, we are able to derive upper bounds on the differences, considering entanglement and distance-based measures. In case of the correlation-based measure, we are able to exactly compute the difference. Next, we present  examples, considering qubit-qubit and qubit-qutrit scenarios, demonstrating the existence of cases where an enhancement in the non-Markovianity due to perturbations is observed. These examples indicate that initial perturbations in system Hamiltonian can be beneficial in terms of resource (non-Markovianity) generation. This is akin to 
%a scenario exactly similar to the celebrated 
the order from disorder instances in 
%condensed matter physics
cooperative phenomena~\cite{PhysRevA.84.042334,Sadhukhan_2015,PhysRevE.93.012131,PhysRevE.93.032115,PhysRevB.94.014421,Mishra_2016,Bera_2019,PhysRevA.101.053805,Sarkar_2022,PhysRevA.110.012620}. In our case, we show that a deviation from the ideal condition of $\mathbb{TO}$, typically termed as \emph{athermality}, may enhance non-Markovianity.

%Quantum discord~\cite{Alipour_2012,PhysRevLett.88.017901,PhysRevLett.104.080501,RevModPhys.84.1655,Henderson_2001,Bera_2018} quantifies the amount of quantum correlations, apart from entanglement, that a system possesses. Its importance lies in the fact that even separable states can have a non-zero discord, indicating that absence of entanglement does not imply classicality. 

These are the key ideas on which our work is based. The rest of the paper is organized in the following way. Sec. \ref{preliminaries} is devoted to brief discussions about Markovian processes, and the concepts of $\mathbb{TO}$, $\mathbb{MTO}$, $\mathbb{TO}_{\epsilon}$, $\mathbb{MTO}_\epsilon$. Sec.~\ref{Conds} discusses the conditions for state transformations under $\mathbb{MTO}$ and $\mathbb{MTO}_\epsilon$. Sec.~\ref{nm} provides a succinct account on some existing measures of non-Markovianity, as well as our proposed measures. In Sec.~\ref{measure}, relations connecting the measures corresponding to the unperturbed and  perturbed cases are derived, and supplemented with examples. We conclude in Sec.~\ref{conc}.
%\section{Notations}
\section{Preliminaries}\label{preliminaries}
\subsection{Thermal operations}
Here we introduce the concept of thermal operations~\cite{janzing2000thermodynamiccostreliabilitylow} ($\mathbb{TO}$). Suppose we have a system $S$ in the state $\rho_S$ and environment $B$ in the thermal state $\tau_B=\frac{e^{-\beta H_B}}{\Tr(e^{-\beta H_B})}$, where $\beta=\dfrac{1}{T}$ is the inverse temperature of the environment. Initially, the state of the system is uncorrelated with that of the environment. Then, a global unitary $U_{SB}$ acts on the system and the environment to couple them together. The state of the system after the application of the global unitary is obtained by tracing out the environment after the evolution, 
%\vspace{0.4cm}
\begin{equation*}
\Lambda(\rho_S)=\Tr_B(U_{SB}\rho_S \otimes \tau_B U_{SB}^{\dagger}).  
\end{equation*}
%\vspace{0.4cm}
The operation $\Lambda$ will be called thermal operation if the global unitaries generating the operation are energy-preserving, 
    \begin{equation*}
[U_{SB},H_T]=0,   
\end{equation*}
      where $H_T=H_S\otimes \mathcal{I}_B+ \mathcal{I}_S\otimes H_B$ is the total system-environment Hamiltonian. $H_S$ and $H_B$ are the system and environment Hamiltonian respectively. And $\mathcal{I}_S$ and $\mathcal{I}_B$ denote identity operators acting in the Hilbert spaces of the system and environment respectively. 
      
Also, the operation maps a thermal state of the system (with same temperature as that of the environment) to the same thermal state,  
\begin{equation*}
\Lambda(\tau_S)=\tau_S.
\end{equation*}
Where $\tau_S=\frac{e^{-\beta H_S}}{\Tr(e^{-\beta H_S})}$ denotes a thermal state of the system, having the same temperature as that of the environment.

 In the subsequent subsection, we give a brief account of the Markovian processes and discuss the key assumptions.
 \vspace{-0.9cm}
 \subsection{Markovian processes}
 Let the system under consideration be in an initial  state $\rho_S$, defined in a Hilbert space $\mathbb{H}_S$ of dimension $d_1$. The system interacts with a environment in an initial state $\rho_B$,  defined in a Hilbert space $\mathbb{H}_B$ of dimension $d_2$ . The dynamics describing the evolution of the system's state is considered to be Markovian~\cite{10.1093/acprof:oso/9780199213900.001.0001}, if it has the following characteristics. Firstly, the
 flow of information during the evolution is unidirectional, that is, information can flow from the system to the environment, but any backflow from the environment to the system does not occur. Secondly, the interaction between the system and the environment must be weak enough so that the system and the environment remains practically uncorrelated through out the evolution. This implies that the combined state of the system-environment at any particular instant $t$ is given as,
 \begin{equation*}
     \rho_{SB}(t) \approx \rho_S(t) \otimes \rho_B. 
 \end{equation*}
 Finally, the environment is taken to be much larger as compared to the system, such that during the evolution the environment's state changes negligibly. At any instant $t$, environment's state is given as,
  \begin{equation*}
     \rho_B(t) \approx \rho_B.
 \end{equation*}
 We introduce the concept of Markovian thermal operations in the next subsection. They are obtained by imposing the constraint of Markovianity on the elements of the set of all thermal operations.
 \subsection{Markovian thermal operations}
We define the set of Markovian thermal operations ($\mathbb{MTO}$) as those thermal operations which keep the total state of the system and the environment state in a tensor product form throughout the evolution occurring due to the application of the global unitary, $U_{SB}$. Also, we consider the state of the environment to remain unchanged during the evolution under $\mathbb{MTO}$, i.e.
\begin{equation*}
U_{SB}\rho_S \otimes \tau_B U_{SB}^{\dagger}=\rho'_S \otimes \tau_B.  
\end{equation*}
Where,
\begin{equation}
   \rho'_S=\Tr_B\left(U_{SB}\rho_S \otimes \tau_B U_{SB}^{\dagger}\right). \label{rho}
\end{equation}
%\vspace{0.4cm}
This is a scenario similar to the non-Markovian collisional model~\cite{PhysRev.129.1880,CICCARELLO20221} where at every time of observation, the environment state so obtained after the evolution is discarded and replaced by freshly prepared state of the environment equivalent to the initial state $\tau_B$. This ensures that all the conditions of Markovianity are satisfied.

Clearly, the set $\mathbb{MTO}$ is a subset of the set of $\mathbb{TO}$. This is pictorially depicted in fig.~\ref{f1}. 

Due to perturbation in the system, the sets of $\mathbb{TO}$ and $\mathbb{MTO}$ are extended to form the set of approximate thermal operations $\mathbb{TO}_{\epsilon}$ and approximate Markovian thermal operations $\mathbb{MTO}_\epsilon$ respectively.
We elaborate about the set $\mathbb{TO}_{\epsilon}$ in the following subsection.
 \subsection{Approximate thermal operations}
 The system's Hamiltonian, in presence of perturbations, assumes the following form:
  \begin{equation*}
H'_S=H_S+\epsilon H',
\end{equation*}
So the total Hamiltonian changes to 
 \begin{align*}
H'_T&=H'_S\otimes \mathcal{I}_B+ \mathcal{I}_S\otimes H_B,\nonumber\\
&=H_T+\epsilon H'\otimes\mathcal{I}_B.
\end{align*}
Then the commutator between the global unitary $U_{SB}$ and the new total Hamiltonian $H^{'}_T$ becomes:
 \begin{equation*}
[U_{SB},H'_T]=[U_{SB},H_T]+\epsilon[U_{SB},H'\otimes\mathcal{I}_B].
\end{equation*}
The commutator $[U_{SB},H'\otimes\mathcal{I}_B]\neq0$ in general even though $[U_{SB},H_T]=0$. So the unitaries may not commute with the total perturbed Hamiltonian and hence the resulting operation is no longer a thermal operation. As the parameter $\epsilon$ is small, we can say that the resulting set of operations form \textit{approximate thermal operations}~\cite{one} and denote the set by $\mathbb{TO}_{\epsilon}$. Throughout the paper, we will assume that the system's original, unperturbed Hamiltonian has non-degenerate spectra so that the tools of non-degenerate time independent perturbation theory can be applied to describe the effect of perturbation.

We explore the concept of approximate Markovian thermal operations in the following subsection.
%\vspace{-0.9cm}
 \subsection{Approximate markovian thermal operations}
 The idea behind approximate Markovian thermal operations ($\mathbb{MTO}_\epsilon$) is the same as that of $\mathbb{MTO}$ i.e. they are those approximate thermal operations under which the total state of the system and the environment remains in a tensor product form throughout the evolution through applying the global unitaries,
\begin{equation*}
U_{SB}\rho^{\epsilon}_S \otimes \tau_B U_{SB}^{\dagger}=\rho'^{\epsilon}_S \otimes \tau_B.
\end{equation*}
Where
\begin{equation}
   \rho'^{\epsilon}_S=\Tr_B\left(U_{SB}\rho^{\epsilon}_S \otimes \tau_B U_{SB}^{\dagger}\right). \label{rhoep}
\end{equation}
The set $\mathbb{MTO}_\epsilon$ is a subset of $\mathbb{TO}_{\epsilon}$ in the same way as $\mathbb{MTO}$ is a subset of $\mathbb{TO}$.

In the subsequent section, we elucidate about the laws governing the possible state transformations under thermal operations, and derive the conditions for the same in case of Markovian thermal operations and approximate Markovian thermal operations.   
 \begin{figure}
 \vspace{-4.15cm}
\hspace{1cm}
\includegraphics[scale=0.4]{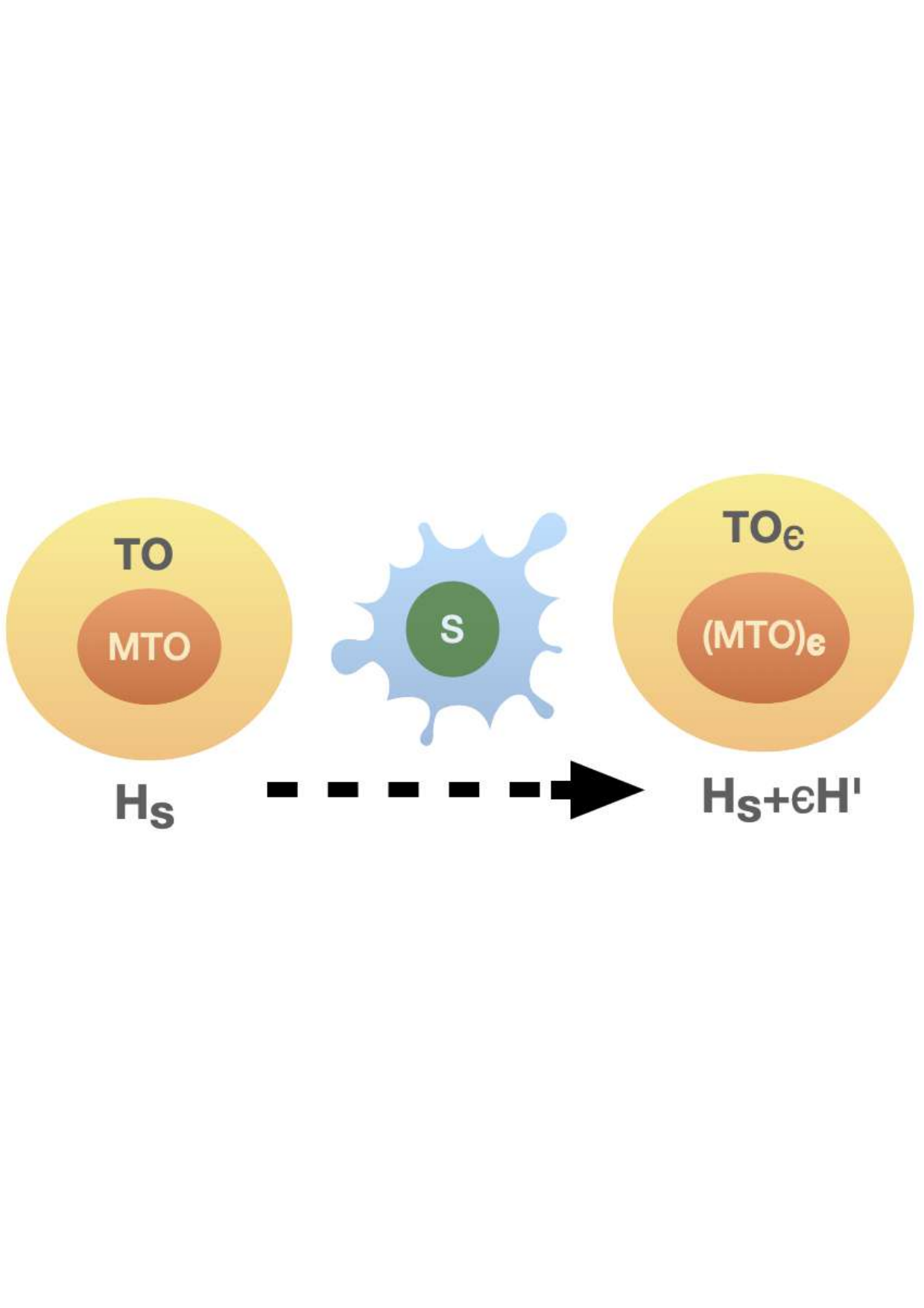}
\vspace{-3cm}
\caption{Schematic diagram showing the relationship between the sets of $\mathbb{MTO}$ and $\mathbb{TO}$. The former is a subset of the latter. In presence of perturbation in the system Hamiltonian, $\mathbb{TO}$ and $\mathbb{MTO}$ changes to the sets of $\mathbb{TO}$$_\epsilon$ and $\mathbb{MTO}_\epsilon$ respectively.  Similar to the unperturbed scenario, the set of  $\mathbb{MTO}_\epsilon$ resides inside the set of $\mathbb{TO}$$_\epsilon$.}
\label{f1}
\end{figure}

 \section{Conditions for state transformation under $\mathbb{MTO}$ and $\mathbb{MTO}_\epsilon$} \label{Conds}
Under $\mathbb{TO}$, a state transforms~\cite{PhysRevLett.115.210403} in the following manner. The
diagonal elements of a state transform as
\begin{equation}
\Lambda  \ket{i} \bra{i}= \sum_{j} P(i \rightarrow j) \ket{j} \bra{j}.
\end{equation}
 For systems whose Hamiltonian has a non-degenerate Bohr spectra i.e. the differences between any two energy levels are not degenerate, the off-diagonal elements transform as
    \begin{equation}
\Lambda  \ket{i} \bra{j}= \Lambda_{ij}\ket{i} \bra{j}, i\neq j.
\end{equation}

Here, $P(i \rightarrow j)$ denotes the transition probability from the energy level $i$ to $j$, whereas $\Lambda_{ij}$ refer to the factors by which the off-diagonal elements are damped during the evolution.

Since $\mathbb{MTO}$ is a subset of $\mathbb{TO}$, the state transformation laws are same under $\mathbb{MTO}$. However, the condition of Markovianity imposes extra constraint on the parameters of global unitaries applied on system-environment. 

Let $\{E_R\}$ denote the environment energy levels and $\{\alpha_{E_R}^{ji}\}$ refer to the parameters of the global unitary. Also, let $E_i-E_j=\omega_{ji}$ be the difference the energy levels of the system's Hamiltonian. Then, the parameters of the unitaries that generate the set of $\mathbb{MTO}$ are constrained as 
\begin{equation}
    |\alpha_{E_R}^{ji}|^{2}= \frac{P(E_R+\omega_{ji})P(i \rightarrow j)}{P(E_R)}, \label{eq:1}
\end{equation}
\begin{equation}
    \alpha_{E_R}^{ii}({\alpha_{E_R}^{jj}}^{*})= \Lambda_{ij}, \forall E_R. \label{eq:2}
\end{equation}

The proofs of equations \eqref{eq:1} and \eqref{eq:2} are detailed in the appendices \ref{appA} and \ref{appB}.

Similarly, the state transformation laws under $\mathbb{MTO}_\epsilon$ are same as that of  $\mathbb{TO}_{\epsilon}$ , as the former is just the subset of the latter. The presence of Markovianity in $\mathbb{MTO}_\epsilon$  imposes extra constraints on the parameters of global unitaries applied on initial system-environment composite state. The conditions obtained are identical to equations \eqref{eq:1} and \eqref{eq:2} (see appendix \ref{appC} for the proof).

In the next section, we give a brief account of some of the existing measures of non-Markovianity. We also provide the definitions of our proposed measures of non-Markovianity, in the context of thermal operations and approximate thermal operations. 

 \section{Measures of Non-Markovianity } \label{nm}
 The amount of non-Markovianity of a given dynamics can be quantified by any nonnegative function whose value is zero if the dynamics is Markovian. Such a function is referred to as measure of non-Markovianity. A number of measures exist in literature for the quantification of non-Markovianity. In the following subsection, we briefly explain some of the measures.
 \subsection{Brief discussion on existing measures}
 \label{A}
  In this subsection, we discuss about the RHP measure, the BLP measure and the geometric measure of non-Markovianity. 
  
  The RHP measure~\cite{PhysRevLett.105.050403,PhysRevA.83.052128} introduced by Rivas, Huelga and Plenio, is based on the concept of CP-divisibility of a dynamical map. Any completely positive trace preserving (CPTP) map $\Lambda(t,t_0)$ is called CP-divisible if it can be decomposed as,
\begin{equation*}
    \Lambda\left(t_1+\tau,t_0\right)= \Lambda\left(t_1+\tau,t_1\right) \Lambda\left(t_1,t_0\right),
\end{equation*}
with $\Lambda\left(t_1+\tau,t_1\right) $ being a completely positive map for any $t_1,\tau >0$. RHP considered any map to be Markovian iff it is CP-divisible. Based on this they defined a function,
\begin{equation*}
    g(t)= \lim_{\tau \rightarrow 0^+}\frac{ ||\Lambda(t+\tau,t)\otimes \mathcal{I}_d(\ket{\Phi}\bra{\Phi})||_1-1}{\tau}.
\end{equation*}
Here $||A||_1=\Tr{\sqrt{A^{\dagger}A}}=\Tr{\sqrt{AA^{\dagger}}}$ is the trace norm of any linear operator $A$ and $\ket{\Phi}=\frac{1}{\sqrt{d}}|\sum_{i=0}^{d-1}\ket{ii}$ is the maximally entangled state in $\mathbb{C}^d \otimes \mathbb{C}^d $ bipartite system. 

 In general $g(t)\geq 0$, with the equality holding only when the map $\Lambda(t,0)$ is CP-divisible and hence Markovian. On this notion, RHP defined a measure of non-Markovianity $\mathcal{N}_{RHP}(\Lambda)$ for any map $\Lambda$ as,
\begin{equation*}
    \mathcal{N}_{RHP}(\Lambda) = \frac{\mathscr{I}}{\mathscr{I}+1},
\end{equation*}
where $\mathscr{I}=\int_{0}^{\infty} g(t) dt$. 

 Proposed by Breuer, Laine and Piilo, the BLP measure ~\cite{PhysRevLett.103.210401,PhysRevA.83.052128} is based on the fact that the distinguishability of a pair of states evolving under Markovian maps decreases with time. The distinguishability $\mathscr{D}\left( \rho_1, \rho_2\right)$, of two states $\rho_1$ and $\rho_2$ is measured by taking the trace distance between them, 
 \begin{equation*}
     \mathscr{D}\left( \rho_1, \rho_2\right)=\frac{1}{2} ||\rho_1 - \rho_2||_1,
 \end{equation*}
 and
 \begin{equation*}
   \mathscr{R}\left(\rho_1, \rho_2;t\right)  = \frac{d}{dt} \mathscr{D}\left (\rho_1, \rho_2\right),
 \end{equation*}
 is the rate of change of distinguishability between two states $\rho_1$ and $\rho_2$. According to BLP, for Markovian process we must have $\mathscr{R} \leq 0$. Positive value of $\mathscr{R}$ indicates that the process is non-Markovian. Hence for any map $\Lambda$, they introduced a measure of non-Markovianity, $ \mathcal{N}_{BLP}(\Lambda)$ given as,
 \begin{equation*}
    \mathcal{N}_{BLP}(\Lambda) = \sup_{\rho_1, \rho_2}\int_{\mathscr{R}>0}\mathscr{R}\left(\rho_1, \rho_2;t\right).
\end{equation*}

Lastly, we consider the geometric measure of non-Markovianity~\cite{Rivas_2014}. Let $\mathcal{A}$ be the set of all CPTP maps and $\mathcal{M}$ be its subset, containing those maps that are also Markovian. Then, a geometric measure of non-Markovianity corresponding to any map belonging to the set $\mathcal{A}$ can be defined as the distance between the map and its nearest Markovian counterpart. If $\mathbb{D}(\Lambda_1,\Lambda_2)$ with $\Lambda_1,\Lambda_2 \in \mathcal{A}$ be a suitable distance measure that is defined in the space of maps, the geometric measure of non-Markovianity is given by,
  \begin{equation*}
      \mathcal{N}^{geo}_t[\Lambda(t,t_0)]\coloneqq \min_{\Lambda_{M} \in \mathcal{M}} \mathbb{D}[\Lambda(t,t_0),\Lambda_{M}(t,t_0)]. 
  \end{equation*}
In the next subsection, we propose three measures of non-Markovianity, specific to $\mathbb{TO}$ and $\mathbb{TO}_{\epsilon}$. 
 \subsection{Measures of non-Markovianity for $\mathbb{TO}$ and $\mathbb{TO}_{\epsilon}$ } \label{B}
 Let $\mathbb{X}$ be the set of all thermal operations and $\mathbb{X}_\epsilon$ denote the set of all approximate thermal operations. Elements belonging to $\mathbb{X}$ and $\mathbb{X}_\epsilon$ map a state in the input Hilbert space $\mathbb{H}_{in}$ to a state defined in the output Hilbert space $\mathbb{H}_{out}$.  Both the input and output Hilbert spaces have dimension $d_1$. Let $\mathbb{L}_S$ denotes the set of quantum states defined in $H_{in}$. $\mathbb{X}$ and $\mathbb{X}_\epsilon$ comprises of operations that can be Markovian~\cite{Son_2025,PhysRevA.105.012420} as well as non-Markovian. The set of all Markovian thermal operations forms a subset of $\mathbb{X}$ and is denoted by $\mathbb{X}^M$. Similarly, the set of all approximate Markovian thermal operations is designated as $\mathbb{X}^M_\epsilon$.
 
In general, the global unitaries $U_{SB}$ generating $\mathbb{TO}$ and $\mathbb{TO}_{\epsilon}$ can create entanglement between the system and the environment. This means that even if the system and the environment are uncorrelated before being subjected to the action of $U_{SB}$, they might become entangled after the evolution. On the other hand, the unitaries corresponding to the operations $\Lambda \in \mathbb{X}^M$ and $\Lambda_{\epsilon} \in \mathbb{X}^M_\epsilon$ ensure that once the system starts off being uncorrelated with the environment, it remains uncorrelated throughout the evolution.  This means that these unitaries are not able to create entanglement between the system and the environment. 
 
 This motivates us to consider the entanglement formed between the system and the environment as our first measure of non-Markovianity. To be specific, we use the relative entropy of entanglement~\cite{PhysRevA.57.1619,PhysRevA.56.4452,PhysRevLett.78.2275} to estimate the non-Markovianity.

  Relative entropy of entanglement of a bipartite system $\rho_{SB}$ in $\mathbb{C}^{d_1}\otimes\mathbb{C}^{d_2}$ is defined as,
\begin{equation}
     E = \min_{\sigma \in \chi_s} S(\rho_{SB}||\sigma), 
 \end{equation}
 where the set $\chi_s$ denotes the set of all separable states in $\mathbb{C}^{d_1}\otimes\mathbb{C}^{d_2}$, and $S(\rho)=-\Tr(\rho \log_2 \rho)$ denotes the von Neumann entropy of the state $\rho$. 
 
 In the present case, let us consider the interaction between the system and the environment, initially in states $\rho_S$ and $\tau_B$ respectively. The combined system-environment initial state $\rho_S \otimes \tau_B$ is now evolved via global unitaries $U_{SB}$, which generate the set $\mathbb{X}$.
 Thus, for a given operation $\Lambda \in \mathbb{X}$ and given initial environment state $\tau_B$, the maximum entanglement that the corresponding operation can generate, upon acting on the set $\mathbb{L}_S$ of system's state in $\mathbb{C}^{d_1}$ can be defined as a measure of non-Markovianity,
 \begin{equation}
     E_\Lambda \coloneqq \max_{\rho_S\in\mathbb{L}_S} \biggl[\min_{\sigma \in SEP} S\left(U_{SB}\rho_{S}\otimes \tau_BU_{SB}^\dagger||\sigma\right)\biggr]. \label{Eq6}
 \end{equation}

 Suppose that the system is subjected to a perturbation and then coupled to the environment, before the combined system-environment state is evolved via the same global unitaries $U_{SB}$. The evolution of the system alone is now described by an operation $\Lambda_\epsilon \in \mathbb{X}_\epsilon$. The measure $E_\epsilon$ in this case is defined as, 
 \begin{equation}
     E^\epsilon_\Lambda \coloneqq \max_{\rho^{\epsilon}_S\in\mathbb{L}_S} \biggl[\min_{\sigma \in SEP} S\left(U_{SB}\rho^{\epsilon}_S\otimes \tau_BU_{SB}^\dagger||\sigma\right)\biggr]. \label{Eq6.1}
 \end{equation}
 Where $\rho^{\epsilon}_S$ is the altered initial state of the system due to the perturbation. 
 
We have seen that operations belonging to the sets $\mathbb{X}^M$ and $\mathbb{X}^M_\epsilon$, when acts on initially uncorrelated system and the environment, do not produce any correlation between them. It is also known that the quantum mutual information~\cite{PhysRevA.72.032317,PhysRevA.76.032327} of any bipartite product state is always zero. So if the system and the environment starts off uncorrelated and the time evolution of the system state can be described using an element $\Lambda \in \mathbb{X}$, but $\Lambda \notin \mathbb{X}^M$, the mutual information of the evolved state would be non-zero. The argument also holds true for any $\Lambda_\epsilon \in \mathbb{X}_\epsilon$ such that $\Lambda \notin \mathbb{X}_\epsilon^M$ employed in evolving the system. Based on this, we propose the quantum mutual information between the system and the environment as our second measure for the quantification for the non-Markovianity.

Let $I_\Lambda$ and $I^\epsilon_\Lambda$ denote the quantum mutual information between the system and the environment while the system undergoes time evolution through the application of thermal operations and approximate thermal operations respectively. The environment is initialized in the thermal state $\tau_B$, for both perturbed and unperturbed cases. The initial states of the system, with or without perturbations are labeled as $\rho_S$ and and $\rho^\epsilon_S$ respectively. Then,
    \begin{equation}
    I_\Lambda=S(\rho'_S)+S(\rho'_B)-S(\rho_{SB}). \label{eq:5}
    \end{equation}
    \begin{equation}
I^{\epsilon}_\Lambda=S(\rho'^{\epsilon}_S)+S(\rho'^{\epsilon}_B)-S(\rho^{\epsilon}_{SB}).\label{eqn:5}
    \end{equation}
    Where $\rho'_S$ and $\rho'^\epsilon_S$ are given in equations~\eqref{rho} and~\eqref{rhoep}, and,
    \begin{equation}
    \begin{split}
         \rho'_B=\Tr_S\left(U_{SB} \rho_S\otimes \tau_BU^{\dagger}_{SB}\right),\\
         \rho'^\epsilon_B=\Tr_S\left(U_{SB} \rho^\epsilon_S\otimes \tau_BU^{\dagger}_{SB}\right),\\
         \rho_{SB}=U_{SB}\rho_S \otimes \tau_B U_{SB}^{\dagger},\\
         \rho^\epsilon_{SB}=U_{SB}\rho^\epsilon_S\otimes \tau_B U_{SB}^{\dagger}.
         \label{9}
    \end{split}
    \end{equation}
    As a third measure, we formulate a distance-based measure, analogous to the geometric measure of non-Markovianity as discussed in Sec~\ref{A}.
 Suppose we have a $\mathbb{TO}$ $\Lambda \in \mathbb{X}$ and a $\mathbb{MTO}$ $\Lambda^M \in \mathbb{X}^M$, both acting on a state $\rho \in \mathbb{L}_S$. Then the measure of non-Markovianity for any operation $\Lambda$ is defined as
 \begin{equation}
   D_\Lambda \coloneqq \max_{\rho} \Big[\min_{\Lambda^M} ||\Lambda(\rho)-\Lambda^{M}(\rho)||_1\Big].   \label{Eq3}
 \end{equation}
 Likewise, we can define the measure in the space of $\mathbb{TO}_{\epsilon}$, 
\begin{equation}
   D^\epsilon_\Lambda \coloneqq \max_{\rho}  \Big[\min_{\Lambda^M_\epsilon} ||\Lambda_\epsilon(\rho)-\Lambda^M_\epsilon(\rho)||_1\Big].   \label{Eq4}
 \end{equation}
 Where, $\Lambda_\epsilon \in \mathbb{X}_\epsilon$ and $\Lambda^M_\epsilon \in \mathbb{X}^M_\epsilon$ are any arbitrary approximate thermal operation and approximate Markovian thermal operation. 
 
In the next section we present three propositions corresponding to the above three defined measures. These propositions relate the non-Markovianity measures for $\mathbb{TO}_{\epsilon}$ with the corresponding measures in case of $\mathbb{TO}$.
\section{Connecting measures of non-markovianity for $\mathbb{TO}$ with that of $\mathbb{TO}_{\epsilon}$} 
%TO$_\epsilon$}
\label{measure}
We first make the readers familiar with some of the initial conditions taken into account. For instance, we consider the initial state of the environment to be a thermal state corresponding to a Hamiltonian $H_B$ and inverse temperature $\beta$, i.e. $\tau_B=\frac{e^{-\beta H_B}}{\Tr(e^{-\beta H_B})}$. The dimension of the Hilbert space of the environment is $d_2$. Local Hamiltonian of the system in unperturbed scenario is $H_S$ with eigenvalues $\{E_i\}$ and eigenstates $\{\ket{i}\}$. In the perturbed case, the system Hamiltonian is $H_S'=H_S+\epsilon H'$, with eigenvalues $\{E_{i'}\}$ and and eigenstates $\{\ket{i'}\}$, with $i=0,1,2...,d_1-1$. Thus the total dimension of the combined system +environment is $d=d_1d_2$. Initial state of the unperturbed system is taken to be $\rho_S=\sum_{i,j} P_{ij} \ketbra{i}{j}$. For the perturbed system, $\ketbra{i}{j}$ is simply replaced by $\ketbra{i'}{j'}$, so that the initial state in this case is given by $\rho^\epsilon_S=\sum_{i,j} P_{ij} \ketbra{i'}{j'}$. In both situations, the system and the environment are not interacting with each other initially. Hence,  the initial combined system-environment state is just $\rho_S \otimes \tau_B $ and $\rho^\epsilon_S \otimes \tau_B $ for the unperturbed and the perturbed case respectively. After the evolution by a global unitary $U_{SB}$, the final state of the system is given by equations~\eqref{rho} and~\eqref{rhoep}, in the absence and the presence of perturbation respectively. 

Having specified the initial conditions, we are now ready to put forward three propositions, corresponding to each of the three proposed measures of non-Markovianity as discussed in Sec.~\ref{B}. These propositions connect the perturbation-induced non-Markovianity to that of the unperturbed one.
 The first proposition is stated in the following subsection.
\subsection{Entanglement-based measure}
 We present the following proposition that relates the entanglement $E_\Lambda$ generated between the system and the environment while the evolution of the system is described by a  $\mathbb{TO}$ $\Lambda$, and the entanglement generated $E^\Lambda_\epsilon$ between the perturbed system and the environment during the system's evolution via a $\mathbb{TO}$$_\epsilon$ $\Lambda_\epsilon$. The proposition is stated in the following manner.
\begin{proposition}
   Let $E_\Lambda$ and $E^{\epsilon}_\Lambda$ be the entanglement-based measures of non-Markovianity defined in equations \eqref{Eq6} and \eqref{Eq6.1}, for a given $\mathbb{TO}$ $\Lambda$ and the corresponding $\mathbb{TO}_{\epsilon}$ $\Lambda_\epsilon$ respectively. The perturbation-induced change in the amount of entanglement generated, $\Delta E_\Lambda:=E^{\epsilon}_\Lambda-E_\Lambda$ is upper bounded by a quantity $\gamma_\Lambda$, which is specific to the global unitary $U_{SB}$ generating the operations, as well as the perturbing Hamiltonian, $H'$,
    \begin{equation}
    \Delta E_\Lambda\leq \epsilon \gamma_\Lambda.
    \end{equation}
   If the initial state of the system is given as $\rho_S =\sum_{ij} P_{ij}\ketbra{i}{j}$, with $\{\ket{i}\}$ being the eigenbasis of the system's Hamiltonian, $H_S$ and $\sigma$ is any arbitrary separable joint state of the system and the environment in $\mathbb{C}^{d_1} \otimes \mathbb{C}^{d_2} $, the parameter $\gamma_\Lambda$ is given by
   \begin{eqnarray}
\max_{P_{ij}}\min_{\sigma \in SEP}X_\Lambda=\gamma_\Lambda,
   \end{eqnarray}
   with 
   \begin{eqnarray}
   \begin{split}
X_\Lambda&=\Tr\biggl[U_{SB}\biggl(\sum_{ij} P_{ij}\Big(\sum_{k\neq i}\frac{\bra{k}H'\ket{i}}{E_i-E_k}\ketbra{k}{j}\\&+\sum_{l \neq j}\frac{\bra{j}H'\ket{l}}{E_i-E_k}\ketbra{i}{l}\Big)\otimes \tau_B\biggr)U_{SB}^{\dagger}\\&\biggl(\mathcal{I}_d+\log_2U_{SB}\sum_{ij} P_{ij}\ketbra{i}{j}\otimes \tau_B U_{SB}^{\dagger}\biggr)\biggl].
\end{split}
   \end{eqnarray}
    \end{proposition}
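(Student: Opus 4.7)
The plan is to combine first-order non-degenerate perturbation theory with the standard perturbative expansion of the quantum relative entropy, reducing the bound on $\Delta E_\Lambda$ to a first-order correction linear in $\epsilon$. Since $E_\Lambda$ and $E^{\epsilon}_\Lambda$ share the same global unitary $U_{SB}$ and the same initial environment state $\tau_B$, the only source of their difference is the replacement of $\rho_S$ by $\rho^\epsilon_S$, and this difference is itself of order $\epsilon$ by assumption.

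First I would expand each perturbed eigenstate as $\ket{i'} = \ket{i} + \epsilon\sum_{k\neq i}\frac{\bra{k}H'\ket{i}}{E_i-E_k}\ket{k} + O(\epsilon^2)$, which is legitimate because $H_S$ is assumed to have non-degenerate spectrum. Taking the adjoint for $\bra{j'}$ and substituting into $\rho^\epsilon_S = \sum_{ij}P_{ij}\ketbra{i'}{j'}$ yields $\rho^\epsilon_S = \rho_S + \epsilon\,\Delta_S + O(\epsilon^2)$, where $\Delta_S$ is precisely the operator sitting inside the large parentheses in the expression for $X_\Lambda$. Conjugating by $U_{SB}$ and tensoring with $\tau_B$ then gives $\rho^\epsilon_{SB} = \rho_{SB} + \epsilon\,U_{SB}(\Delta_S\otimes\tau_B)U_{SB}^{\dagger} + O(\epsilon^2)$, an additive perturbation of the joint system-environment output state.

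Next I would expand the relative entropy using the identity $\frac{d}{dt}\Tr[\rho(t)\log_2\rho(t)] = \Tr[(\log_2\rho + \mathcal{I}/\ln 2)\dot\rho]$, obtaining $S(\rho^\epsilon_{SB}\|\sigma) = S(\rho_{SB}\|\sigma) + \epsilon\,\Tr[U_{SB}(\Delta_S\otimes\tau_B)U_{SB}^{\dagger}(\mathcal{I}_d + \log_2\rho_{SB} - \log_2\sigma)] + O(\epsilon^2)$, modulo the conventional $\ln 2$ factor from base conversion. To convert this into a bound on $\Delta E_\Lambda$, I would evaluate $\min_\sigma S(\rho^\epsilon_{SB}\|\sigma)$ at the unperturbed optimizer $\sigma^\star(P_{ij})$ of $\min_\sigma S(\rho_{SB}\|\sigma)$, which yields a legitimate upper bound because the infimum is bounded above by the value at any admissible point. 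Combining this with the elementary inequality $\max_P f(P) - \max_P g(P) \leq \max_P[f(P)-g(P)]$ then packages the difference of the two max-min quantities into the single max-min structure of $\gamma_\Lambda$.

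The main obstacle is the $\sigma$-minimisation: the separable state closest to $\rho^\epsilon_{SB}$ is generically different from that closest to $\rho_{SB}$, and an explicit description of this shift would require knowing the local geometry of the separable set near the optimizer. My plan circumvents this by evaluating the perturbed relative entropy at the unperturbed optimizer, which is enough to secure the one-sided inequality claimed in the proposition. A secondary technical point is justifying that the $O(\epsilon^2)$ remainder can be discarded uniformly in $P_{ij}$ and in $\sigma$; this follows because non-degeneracy keeps the denominators $E_i - E_k$ bounded away from zero, $U_{SB}$ is unitary, and the admissible $P_{ij}$ and separable states range over compact sets.
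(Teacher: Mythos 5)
Your proposal is correct in substance and follows the same backbone as the paper's proof: first-order non-degenerate perturbation theory giving $\rho^\epsilon_S=\rho_S+\epsilon\,\tilde{\rho}+O(\epsilon^2)$ (your $\Delta_S$ is the paper's $\tilde{\rho}$, and conjugation by $U_{SB}$ gives the paper's operator $B$), the linearization $\Tr\left[(A+\epsilon B)\log_2(A+\epsilon B)\right]\approx\Tr\left[A\log_2 A\right]+\epsilon\,\Tr\left[B\left(\mathcal{I}_d+\log_2 A\right)\right]$, and elementary max/min bookkeeping to separate the zeroth- and first-order terms. Where you genuinely depart is in the inner minimization: the paper applies $\min_x(f+g)\leq\min_x f+\max_x g$, which produces the first-order term $\epsilon\max_{P_{ij}}\max_{\sigma\in SEP}X_\Lambda$ in its key inequality (subsequently relabeled, inconsistently, as $\max_{P_{ij}}\min_{\sigma\in SEP}X_\Lambda=\gamma_\Lambda$), whereas you evaluate the perturbed relative entropy at the unperturbed optimizer $\sigma^\star(P_{ij})$ — a valid feasibility argument that yields the tighter constant $\max_{P_{ij}}X_\Lambda\left(\sigma^\star(P_{ij})\right)$, which sits between $\min_\sigma X_\Lambda$ and $\max_\sigma X_\Lambda$. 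So your route buys a sharper bound and cleanly sidesteps the local geometry of the separable set, exactly as you intend; but note that, like the paper's own derivation, it does not literally deliver the $\max_{P_{ij}}\min_{\sigma\in SEP}X_\Lambda$ written in the proposition — that form appears to be a slip in the paper, all the more so because the displayed $X_\Lambda$ in the statement contains no $\sigma$ at all, rendering its $\min_\sigma$ vacuous. One further point you get for free: your caveat about the $\ln 2$ factor (i.e., $\mathcal{I}_d/\ln 2$ versus $\mathcal{I}_d$) is immaterial here, since $\Tr B=0$ — the first-order correction $\tilde{\rho}$ is traceless because $\Tr\ketbra{i'}{j'}=\delta_{ij}$ to first order — so the coefficient of the identity never contributes to $X_\Lambda$. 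Your compactness argument for discarding the $O(\epsilon^2)$ remainder uniformly in $P_{ij}$ and $\sigma$ is sound and is, if anything, more careful than the paper, which simply drops higher-order terms without comment.
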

\begin{proof}
  The relative entropy between two states $\rho_1$ and $\rho_2$ can be expressed as, 
\begin{equation}
    S(\rho_1||\rho_2)=\Tr(\rho_1 \log_2 \rho_1-\rho_1 \log_2 \rho_2). \label{relent}
\end{equation}
  Using equation \eqref{relent}, we can rewrite equations \eqref{Eq6} and \eqref{Eq6.1} as,
  \begin{equation}
E_\Lambda=\max_{\rho_S}\biggl[\min_{\sigma \in SEP}\Tr\Big(\rho_{SB}
\log_2\rho_{SB}-\rho_{SB}\log_2 \sigma\Big)\biggr], \label{e1}
  \end{equation}
\begin{equation}  E_\Lambda^\epsilon=\max_{\rho_S^\epsilon}\biggl[\min_{\sigma \in SEP}\Tr\Big(\rho_{SB}^\epsilon
\log_2\rho_{SB}^\epsilon-\rho_{SB}^\epsilon\log_2 \sigma\Big)\biggr],\label{e2}
\end{equation}
where $\rho_{SB}$ and $\rho_{SB}^\epsilon$ are given in equation~\eqref{9}. 

  Using the form of $\rho_S$ in equation~\eqref{e1}, we have,
  \begin{equation}
      E_\Lambda=\max_{P_{ij}} \min_{\sigma \in SEP}\Tr\left(A \log_2 A-A\log_2 \sigma\right), \label{e3}
  \end{equation}
  with $A=U_{SB}\left(\sum_{ij} P_{ij}\ketbra{i}{j}\otimes \tau_B\right) U_{SB}^{\dagger}$.
  
  Also, using the form of $\rho^\epsilon_S$ and employing first order perturbation theory to relate the perturbed eigenstates $\{\ket{i'}\}$ to the unperturbed eigenstates $\{\ket{i}\}$, we can write the equation~\eqref{e2} as,
  \begin{eqnarray*}
  \hspace{-0.2cm}
  \begin{split}
     E^{\epsilon}_\Lambda&=    
   \max_{P_{ij}} \min_{\sigma \in SEP}\biggl[\Tr\Big[\Big(A+\epsilon B\Big)\\& \Big(\log_2(A+\epsilon B)-\log_2\sigma\Big)\Big]\biggr]. 
  \end{split}
  \end{eqnarray*}
  Where,  
  \begin{equation*}
  \begin{split}
B=U_{SB}\biggl(\sum_{ij} P_{ij}\biggl(\sum_{k\neq i}\frac{\bra{k}H'\ket{i}}{E_i-E_k}\ketbra{k}{j}\\+\sum_{l \neq j}\frac{\bra{j}H'\ket{l}}{E_i-E_k}\ketbra{i}{l}\biggr)\otimes \tau_B\biggr)U_{SB}^{\dagger}.
\end{split}
\end{equation*}
Note that considering weak perturbation, we have neglected the higher order terms in $\epsilon$ and kept terms up to first order in $\epsilon$.
Maximization over $P_{ij}$ is such that $\rho_S$ belongs to the set of quantum states defined in Hilbert space, $\mathbb{H}_S$ of the system.
 Now, for any Hermitian matrix $A$ and small perturbation $\epsilon$, one can write, 
 \begin{eqnarray}
 \begin{split}
    \Tr\Big[\Big(A+\epsilon B\Big) \Big(\log_2(A+\epsilon B)\Big)\Big]\\&\hspace{-5cm}\approx\Tr\Big[A\log_2A\Big]+\epsilon \Tr\Big[B\left(\mathcal{I}_d+\log_2A\right)\Big].
    \end{split}
 \end{eqnarray}
Here, $d=d_1d_2$ is the total dimension of the system + environment. Thus one can write,
\begin{eqnarray*}
\begin{split}
   E^{\epsilon}_\Lambda= \max_{P_{ij}} \min_{\sigma \in SEP}\left[\Tr\left(A\log_2\dfrac{A}{\sigma}\right)+\epsilon X_\Lambda\right], 
\end{split}
\end{eqnarray*}
where we have $X_\Lambda=\Tr\left[B\left(\mathcal{I}_d+\log_2\dfrac{A}{\sigma}\right)\right]$ and $\log_2 \dfrac{A}{\sigma}=\log_2 A-\log_2\sigma$.

 Using the inequalities
\begin{equation*}
\begin{split}
    \min_x\Big(f(x)+g(x)\Big)\leq \min_x f(x)+\max_x g(x),\\
    \max_x\biggr(f(x)+g(x)\Big)\leq \max_x f(x)+\max_x g(x).
\end{split}
\end{equation*}
 We get,
 \begin{equation}
 \begin{split}
E^{\epsilon}_\Lambda\leq \max_{P_{ij}} \min_{\sigma \in SEP}\biggl[\Tr\left(A \log_2 A-A\log_2 \sigma\right)\biggr]\\+\epsilon\max_{P_{ij}} \max_{\sigma \in SEP} X_\Lambda. \label{e5}
 \end{split}
  \end{equation} 
   The first term of the inequality~\eqref{e5} is just $E_\Lambda$ (equation~\eqref{e3}). Labelling,
   \begin{equation*}
\max_{P_{ij}}\min_{\sigma \in SEP} X_\Lambda=\gamma_\Lambda,
   \end{equation*} 
   We have,
\begin{equation}
  E^{\epsilon}_\Lambda-E_\Lambda\leq\epsilon \gamma_\Lambda.  
\end{equation}
This completes the proof.
\end{proof}
Thus we derived a bound on perturbation induced change in the amount of total entanglement generated between the system and the environment, considering the relative entropy of entanglement, maximized over the system's states. Next we ask that, can perturbation lead to more entanglement generation? In other words, can initial perturbations enhance non-Markovianity? 
 Below, we present a numerical example where we answer this question affirmatively. Unlike the proposition, for numerical ease we consider logarithmic negativity~\cite{PhysRevA.65.032314,PhysRevLett.95.090503} as a measure of entanglement in the following example.
 
 The logarithmic negativity for a bipartite system $AB$ in a state $\rho$ is defined as,
 \begin{equation}
E_N(\rho)\coloneqq\log_2||\rho^{\Gamma_A}||_1,
 \end{equation}
 where $\Gamma_A$ denotes partial transpose with respect to the subsystem $A$. \\\\
 
\noindent \textbf{Example.} We consider $\mathbb{C}^2\otimes \mathbb{C}^2$ joint states, which means that the system as well as the environment is taken as a qubit. Local Hamiltonian of the system is $H_S=\sigma_z \delta$, where $\delta$ is the unit of energy. The system is initially considered to be diagonal in energy eigenbasis, as $\rho_S=a\ketbra{0}{0}+(1-a)\ketbra{1}{1}$, where the set $\{\ket{0},\ket{1}\}$ contains the eigenstates of $\sigma_z$. The system is then perturbed by an external perturbing Hamiltonian $H'=\sigma_x \delta$. Perturbed Hamiltonian of the system is thus given by $H_S'=\left(\sigma_z+\epsilon\sigma_x \right)\delta$. So, the initial state of the perturbed system is $\rho^{\epsilon}_S=a\ketbra{0'}{0'}+(1-a)\ketbra{1'}{1'}$, where $\ket{0'}$ and $\ket{1'}$ are eigenstates of the perturbed system Hamiltonian $H'_S$. Local Hamiltonian of the environment is $H_B=\sigma_z \delta$ and the initial state of environment is considered to be the thermal state $\tau_B=\frac{e^{-\beta \sigma_z}}{\Tr(e^{-\beta \sigma_z})}$, with the inverse temperature $\beta=\dfrac{1}{T} \dfrac{k_B}{\hslash\delta}$, where $\dfrac{\hslash\delta}{k_B}$ gives the unit of temperature. Here, $k_B$ is the Boltzmann constant and $\hslash$ is the reduced Planck's constant.

Now if $U_{SB}$ is the global unitary generating the $\mathbb{TO}$, then $\big[U_{SB}, H_S\otimes \mathcal{I}_2+\mathcal{I}_2\otimes H_B]=0$. This suggests that $U_{SB}$ must be diagonal in the energy eigenbasis of the total Hamiltonian, $H_T=\left(H_S\otimes \mathcal{I}_2+\mathcal{I}_2\otimes H_B\right)\delta$. However in appendix~\ref{appD}, we show that for the concerned example, unitaries of the type $U_{SB}=\sum_{i,j}e^{-\mathbb{I}\alpha_{ij}}\ketbra{ij}{ij}$ with $i,j=0,1$ and $\mathbb{I}=\sqrt{-1}$ can never generate entanglement. 

In our example, the total Hamiltonian $H_T$ has degenerate spectra, with $\ket{01}$ and $\ket{10}$ having same energies. Thus any linear combination of  $\ket{01}$ and $\ket{10}$ must be an eigenstate of $U_{SB}$, to satisfy the energy conservation. Hence we choose global unitary of the type,
\begin{equation*}
\begin{split}
    U_{SB}=e^{-\mathbb{I}\alpha_1}\ket{00}\bra{00}+e^{-\mathbb{I}\alpha_2}\ket{11}\bra{11}\\+e^{-\mathbb{I}\alpha_3}\ket{\psi_1}\bra{\psi_1}+e^{-\mathbb{I}\alpha_4}\ket{\psi_2}\bra{\psi_2}.
\end{split}
\end{equation*}
with,
\begin{equation*}
\begin{split}
   \ket{\psi_1}=\sqrt{\frac{2}{3}}\ket{01}+\sqrt{\frac{1}{3}}\ket{10},\\
   \ket{\psi_2}=\sqrt{\frac{1}{3}}\ket{01}-\sqrt{\frac{2}{3}}\ket{10}.
\end{split}
\end{equation*}
Clearly, $[U_{SB},H_T]=0$ and $U_{SB}$ being no longer diagonal in the computational basis, is now able to generate entanglement.

The amount of entanglement $E_N$ produced between the system and the environment, when the system is subjected to a $\mathbb{TO}$ is given by,
\begin{equation*}
E_N=\log_2||(\rho_{SB})^{\Gamma_A}||_1,
\end{equation*}
 Likewise the amount of entanglement generated while the system is subjected to the corresponding $\mathbb{TO}$$_\epsilon$ is given by,
\begin{equation*}
E_N^\epsilon=\log_2||(\rho_{SB}^\epsilon)^{\Gamma_A}||_1,
\end{equation*}
The form of $\rho_{SB}$ and $\rho_{SB}^{\epsilon}$ are given in equation~\eqref{9}.

So the perturbation-induced change in entanglement is given by,
\begin{equation*}
    \Delta E_N=E_N^\epsilon-E_N.
\end{equation*}
\begin{figure}
\vspace{-0.5cm}
\includegraphics[scale=0.25]{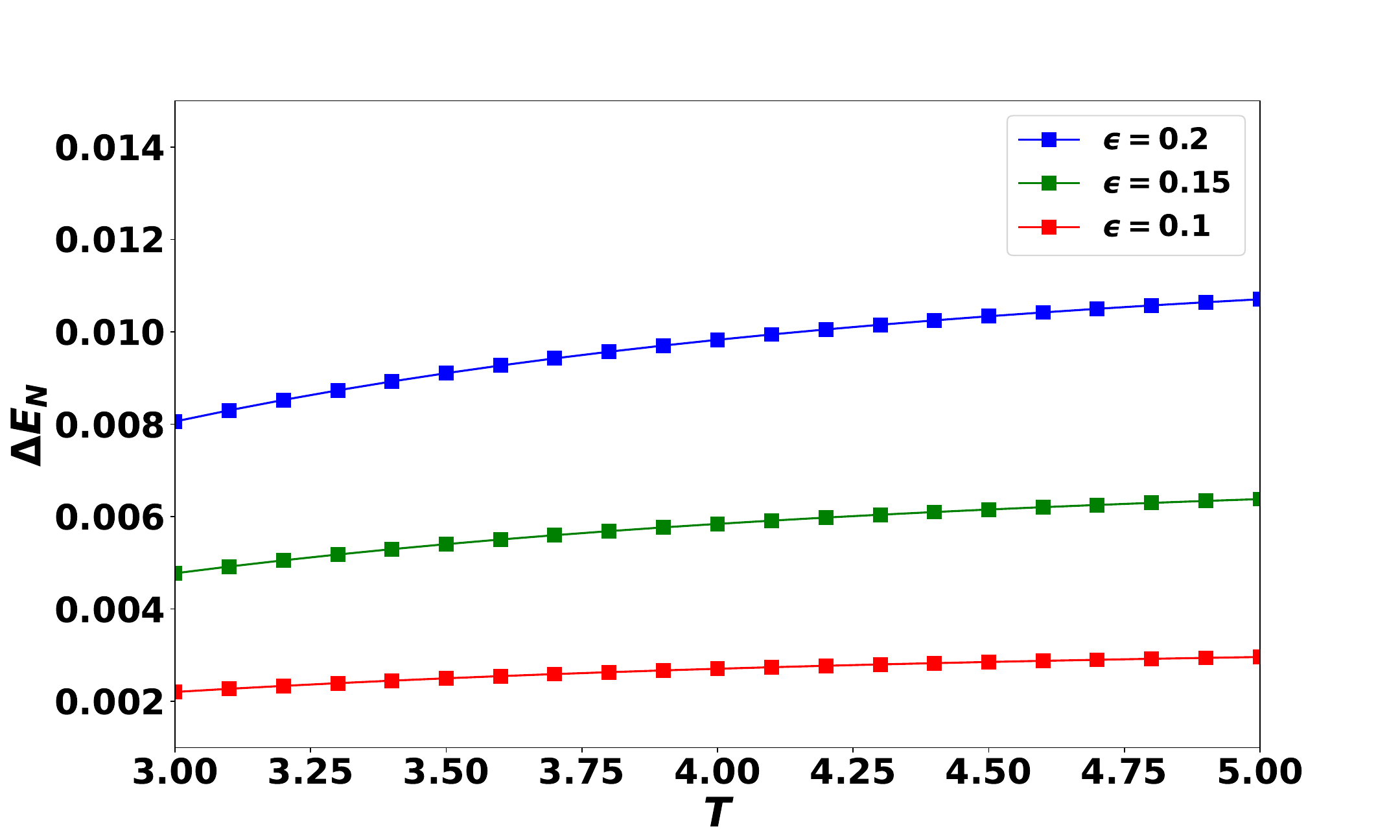}
\caption{Plot depicting the variation of the difference in the amount of entanglement generated $\Delta E_N=E^{\epsilon}_N-E_N$  with the environment temperature $T$, for $\mathbb{TO}$$_\epsilon$ and $\mathbb{TO}$, for three values of perturbation parameters, $\epsilon=$ 0.1, 0.15 and 0.2. It can be seen that for all three values of $\epsilon$, $\Delta E_N$ is positive and it increases with increasing $T$. This suggest that in this case, the perturbation enhances the non-Markovianity. The vertical axis is in the units of e-bits, while the horizontal axis is in units of $\dfrac{\hslash\delta}{k_B}$ ($\delta$ is the unit of energy).} 
\label{fig6}
\end{figure}
In fig.~\ref{fig6} we plot $ \Delta E_N$ with respect to the environment temperature $T$, by fixing the parameters
 $\alpha_1 = 10^5, \alpha_2 = 2\times10^5,\alpha_3 = 3\times10^5,\alpha_4 = 4\times10^5$ and $a=0.9$, for three different sets of degrees of perturbation which are $\epsilon=0.1,\epsilon=0.15$ and $\epsilon=0.2$. We see from the plot that $\Delta E_N$ is positive and it increases with increasing $T\in[3,5]$ units and $\epsilon$. This particular example thus illustrates that the presence of perturbation in the system can give rise to an enhancement in the non-Markovianity.
 
The second proposition relating the measure of non-Markovianity in terms of the total correlation produced between the system and the environment, due to the application of $\mathbb{TO}$ and $\mathbb{TO}$$_\epsilon$ is presented in the next subsection.

\subsection{Total correlation-based measure}
Here we give our second proposition relating the correlation-based non-Markovianity measures $I_\Lambda$ and $I^\epsilon_\Lambda$. The proposition is stated as follows.
%The quantum mutual information~\cite{} measures the total correlation present between the subsystems of any bipartite system.    
\begin{proposition} \label{prop2}
  Let $I_\Lambda$ be the total correlation generated between the system and the environment due to a thermal operation $\Lambda$, as given in equation~\eqref{eq:5} and $I^\epsilon_\Lambda$ be the total correlation created between the system and the environment by the corresponding approximate thermal operation $\Lambda_\epsilon$, as given in equation~\eqref{eqn:5}. Then the difference $\Delta I_\Lambda:= I^\epsilon_\Lambda-I_\Lambda$ is given as,
   \begin{equation}
\Delta I_\Lambda=\epsilon\theta_\Lambda.
   \end{equation}
   Where $\theta_\Lambda$ given in equation~\eqref{th} is a quantity that depends on the unitary generating $\Lambda$ and the perturbing Hamiltonian $H'$.
\end{proposition}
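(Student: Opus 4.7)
The plan is to exploit the observation that the only change introduced by the perturbation at the level of the initial state is a basis rotation at the eigenvector level, which leaves the total von Neumann entropy of the composite state invariant. Since $\rho_S=\sum_{ij}P_{ij}\ketbra{i}{j}$ and $\rho^{\epsilon}_S=\sum_{ij}P_{ij}\ketbra{i'}{j'}$ share the same coefficient matrix in two orthonormal bases, they have identical spectra, and hence $S(\rho_S)=S(\rho^{\epsilon}_S)$. Because $U_{SB}$ is unitary and both inputs are products with the same $\tau_B$, this gives $S(\rho_{SB})=S(\rho^{\epsilon}_{SB})=S(\rho_S)+S(\tau_B)$. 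The joint-entropy term in the mutual information therefore drops out of $\Delta I_\Lambda$, reducing it to $\Delta I_\Lambda=[S(\rho'^{\epsilon}_S)-S(\rho'_S)]+[S(\rho'^{\epsilon}_B)-S(\rho'_B)]$.

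Next, I would import the first-order non-degenerate perturbation expansion $\ket{i'}=\ket{i}+\epsilon\sum_{k\neq i}\frac{\bra{k}H'\ket{i}}{E_i-E_k}\ket{k}+O(\epsilon^2)$ already used in Proposition~1. This yields $\rho^{\epsilon}_S=\rho_S+\epsilon M_S+O(\epsilon^2)$ with $M_S$ Hermitian and traceless, determined by $\{P_{ij}\}$, $H'$, and the gaps of $H_S$. Propagating through $U_{SB}(\,\cdot\,\otimes\tau_B)U_{SB}^{\dagger}$ and partial-tracing gives $\rho'^{\epsilon}_S=\rho'_S+\epsilon\mu_S+O(\epsilon^2)$ and $\rho'^{\epsilon}_B=\rho'_B+\epsilon\mu_B+O(\epsilon^2)$, where $\mu_S=\Tr_B\bigl[U_{SB}(M_S\otimes\tau_B)U_{SB}^{\dagger}\bigr]$ and $\mu_B=\Tr_S\bigl[U_{SB}(M_S\otimes\tau_B)U_{SB}^{\dagger}\bigr]$, both traceless.

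The last ingredient is the standard first-order expansion of the von Neumann entropy: for a Hermitian traceless $\sigma$, $S(\rho+\epsilon\sigma)=S(\rho)-\epsilon\,\Tr[\sigma\log_2\rho]+O(\epsilon^2)$. Applying it separately to $(\rho'_S,\mu_S)$ and $(\rho'_B,\mu_B)$ and summing delivers $\Delta I_\Lambda=-\epsilon\bigl[\Tr(\mu_S\log_2\rho'_S)+\Tr(\mu_B\log_2\rho'_B)\bigr]+O(\epsilon^2)$, which, at the order retained, is the claimed $\epsilon\theta_\Lambda$ with an explicit formula for $\theta_\Lambda$ in terms of $U_{SB}$, $H'$, $\tau_B$, and the initial populations and coherences $P_{ij}$. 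The absence of a bound-in-favor-of-equality phenomenon here, as opposed to Proposition~1, is traceable to the fact that the mutual-information measure is evaluated on a fixed input rather than being maximized over $\rho_S$, so no sup/inf interchange inequality is invoked.

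The main technical hurdle I anticipate is a clean derivation of the first-order entropy expansion when $\sigma$ and $\rho$ do not commute. I would handle this via the integral representation $\log X=\int_0^{\infty}\bigl[(s+1)^{-1}-(s+X)^{-1}\bigr]ds$, differentiate under the integral sign, and exploit cyclicity of the trace to verify that only the naive term $-\Tr[\sigma\log_2\rho]$ survives together with a contribution proportional to $\Tr[\sigma]$, which vanishes here by the tracelessness of $\mu_S$ and $\mu_B$. A subtler point is confirming that the perturbed eigenvectors $\{\ket{i'}\}$ form an orthonormal set at the working order in $\epsilon$, which is what legitimizes the cancellation of $S(\rho^{\epsilon}_{SB})$ against $S(\rho_{SB})$ in the first step; this follows from the exclusion $k\neq i$ in the definition of the correction and from Hermiticity of $H'$.
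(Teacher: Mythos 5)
Your proof is correct, and its computational engine is the same as the paper's: the first-order non-degenerate expansion $\rho^{\epsilon}_S=\rho_S+\epsilon[G,\rho_S]+O(\epsilon^2)$ (your $M_S$ is the paper's $\tilde{\rho}$), linear propagation through $U_{SB}(\cdot\otimes\tau_B)U_{SB}^{\dagger}$ and the partial traces, and the first-order entropy expansion $S(\rho+\epsilon\sigma)\approx S(\rho)-\epsilon\Tr[\sigma\log_2\rho]$ for traceless Hermitian $\sigma$. Where you genuinely depart is the joint-entropy term: the paper also expands $S(\rho^{\epsilon}_{SB})$ to first order, producing the term $\bar{C}=\Tr[\beta_3(\mathcal{I}_d+\log_2\rho_{SB})]$ in its $\theta_\Lambda=\bar{C}-\bar{A}-\bar{B}$, whereas you cancel it exactly by isospectrality: $\rho^{\epsilon}_S=V\rho_S V^{\dagger}$ with $V=\sum_i\ketbra{i'}{i}$ unitary, so $S(\rho^{\epsilon}_{SB})=S(\rho_S)+S(\tau_B)=S(\rho_{SB})$ to all orders in $\epsilon$ (and you can drop your hedge about orthonormality ``at working order'': the exact eigenvectors of the Hermitian $H'_S$ are exactly orthonormal, which is what the paper's definition of $\rho^{\epsilon}_S$ uses). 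This buys a real simplification and a small sharpening of the paper's result: it shows $\bar{C}$ vanishes identically, which one can confirm inside the paper's own formalism since $\Tr\beta_3=\Tr[G,\rho_S]=0$ and $\Tr[\beta_3\log_2\rho_{SB}]=\Tr\{[G,\rho_S]\log_2\rho_S\}=0$ by cyclicity of the trace and $[\rho_S,\log_2\rho_S]=0$; hence $\theta_\Lambda=-(\bar{A}+\bar{B})$, which is exactly your $-\Tr(\mu_S\log_2\rho'_S)-\Tr(\mu_B\log_2\rho'_B)$, with $\mu_S=\beta_1$ and $\mu_B=\beta_2$. Incidentally, your insistence that $M_S$ be Hermitian implicitly corrects a typo in the paper's $\tilde{\rho}$, whose second denominator should read $E_j-E_l$ rather than $E_i-E_k$ (as it correctly does in $\vartheta(H')$ of the third proposition). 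Two caveats you share equally with the paper, so they are not gaps relative to it: the entropy derivative requires $\supp\mu_S\subseteq\supp\rho'_S$ and $\supp\mu_B\subseteq\supp\rho'_B$ (guaranteed, e.g., when the outputs are full rank), which neither treatment verifies; and in both treatments the stated equality $\Delta I_\Lambda=\epsilon\theta_\Lambda$ is an equality modulo the discarded $O(\epsilon^2)$ terms.
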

\begin{proof}
Writing the initial perturbed state of the system $\rho^{\epsilon}_S$ in terms of unperturbed basis states, we have
    \begin{eqnarray*}
    \begin{split}
\rho^{\epsilon}_S&=\sum_{ij}P_{ij}\biggl[\ketbra{i}{j} +\epsilon\biggl(\sum_{k\neq i}\frac{\bra{k}H'\ket{i}}{E_i-E_k}\ketbra{k}{j}\\&+\sum_{l \neq j}\frac{\bra{j}H'\ket{l}}{E_i-E_k}\ketbra{i}{l}\biggr)+O\left(\epsilon^{2}\right)\biggr].
    \end{split}
    \end{eqnarray*}
Again considering weak perturbation we omit the higher order terms in $\epsilon$ and keep terms up to first order in $\epsilon$.
    Writing 
    \begin{equation*}
        \tilde{\rho}=\sum_{k\neq i}\frac{\bra{k}H'\ket{i}}{E_i-E_k}\ketbra{k}{j}+\sum_{l \neq j}\frac{\bra{j}H'\ket{l}}{E_i-E_k}\ketbra{i}{l},
    \end{equation*}
    we get $\rho^{\epsilon}_S$ in terms of $\rho_S$ as
    \begin{equation*}
\rho^{\epsilon}_S=\rho_S+\epsilon \tilde{\rho}.
    \end{equation*}
The evolved state of the unperturbed system upon the the action of the $\mathbb{TO}$ is given as $\rho_S'=\Lambda(\rho_S)$.
   In the perturbed scenario, the evolved state of the system upon the action of the corresponding $\mathbb{TO}$$_\epsilon$ $\Lambda_\epsilon$ is given as,
    \begin{eqnarray*}
    \begin{split}
\rho'^{\epsilon}_S&=\Lambda_\epsilon( \rho^{\epsilon}_S)\\
&=\Tr_B\left(U_{SB}\rho^{\epsilon}_S \otimes \tau_B U_{SB}^{\dagger}\right)\\
&=\Tr_B\left[U_{SB}\left(\rho_S+\epsilon \tilde{\rho}\right)\otimes\tau_B U_{SB}^{\dagger}\right]\\
&=\Tr_B\left(U_{SB}\rho_S\otimes\tau_B U_{SB}^{\dagger}\right)\\&+\epsilon \Tr_B\left(U_{SB}\tilde{\rho}\otimes\tau_B U_{SB}^{\dagger}\right).
       \end{split}
    \end{eqnarray*}
  So, we can write $\rho'^{\epsilon}_S=\rho'_S+\epsilon \beta_1$ with 
  $ \beta_1=\Tr_B\left(U_{SB}\tilde{\rho}\otimes\tau_B U_{SB}^{\dagger}\right)$.

  Now we calculate the Von-Neumann entropy of $\rho'^{\epsilon}_S$:
  \begin{align*}
      S(\rho'^{\epsilon}_S)&=-\Tr\left[\left(\rho'_S+\epsilon \beta_1\right)\log_2\left(\rho'_S+\epsilon \beta_1\right)\right],\nonumber\\
      &\approx-\Tr\left(\rho'_S \log_2 \rho'_S\right)-\epsilon \Tr\left[\beta_1\left(\mathcal{I}_{d_1}+\log_2 \rho'_S\right)\right].
      \end{align*}

  Thus we can express the Von-Neumann entropy of $\rho'^{\epsilon}_S$ in terms of that of $\rho'_S$ as
  \begin{equation*}
S(\rho'^{\epsilon}_S)=S(\rho'_S)-\epsilon \bar{A},
  \end{equation*}
  with $\bar{A}=\Tr\left[\beta_1\left(\mathcal{I}_{d_1}+\log_2 \rho'_S\right)\right]$.

 Proceeding in the same way, we obtain the expressions connecting the Von-Neumann entropies of the perturbed states $\rho'^{\epsilon}_B$ and $\rho^{\epsilon}_{SB}$ with the corresponding unperturbed states $\rho'_B$ and $\rho_{SB}$,
  \begin{equation*}
  \begin{split}
    S(\rho'^{\epsilon}_B)=S(\rho'_B)-\epsilon \bar{B},\\
S(\rho^{\epsilon}_{SB})=S(\rho_{SB})-\epsilon \bar{C}, 
  \end{split}
  \end{equation*}
  with 
  \begin{equation*}
      \begin{split}
          \bar{B}=\Tr\left[\beta_2\left(\mathcal{I}_{d_2}+\log_2 \rho'_B\right)\right],\\    \bar{C}=\Tr\left[\beta_3\left(\mathcal{I}_{d}+\log_2 \rho_{SB}\right)\right].
      \end{split}
  \end{equation*}
  Where $\beta_2$ and $\beta_3$ are given by,
  \begin{equation*}
  \begin{split}
\beta_2=\Tr_S\left(U_{SB}\tilde{\rho}\otimes\tau_B U_{SB}^{\dagger}\right),  \\ 
\beta_3=U_{SB}\tilde{\rho}\otimes\tau_B U_{SB}^{\dagger}.    
  \end{split}
\end{equation*}
  Collecting all terms, the expression for mutual information in presence of perturbation can be written as,
  \begin{align*}
I^{\epsilon}_\Lambda&=S(\rho'^{\epsilon}_S)+S(\rho'^{\epsilon}_B)-S(\rho^{\epsilon}_{SB}),\\
    &=S(\rho'_S)+S(\rho'_B)-S(\rho_{SB})+\epsilon\left(\bar{C}-\bar{A}-\bar{B}\right).
  \end{align*}
  Using equation~\eqref{eq:5} and labelling,
  \begin{equation}
      \theta_\Lambda=\left(\bar{C}-\bar{A}-\bar{B}\right),
      \label{th}
  \end{equation}
 We arrive at the required result
 \begin{equation}
I^{\epsilon}_\Lambda=I_\Lambda+\epsilon\theta_\Lambda.
 \end{equation} 
 This completes the proof.
\end{proof}
Note that $\theta_\Lambda$ can be positive or negative, indicating that the total correlation may increase or decrease in presence of perturbation. However we are interested in a scenario when perturbation leads to an increase in non-Markovianity.
Below, we give a numerical example of such a scenario when the total correlation increases in presence of perturbation.\\

\noindent \textbf{Example.} We consider the combined state of system-environment in $\mathbb{C}^2\otimes\mathbb{C}^3$, i.e. the system is taken to be a qubit and the environment as a qutrit. The unperturbed system Hamiltonian is given by $H_S=\omega\sigma_z \delta$, with an eigenbasis $\{\ket{0},\ket{1}\}$. The initial state of the system, in absence of perturbations, is given in the eigenbasis of $H_S$ as $\rho_S=a\ketbra{0}{0}+(1-a)\ketbra{1}{1}$. The environment Hamiltonian is given by one of the Gell-Mann matrices,
\begin{equation*}
    H_B=\begin{pmatrix}
0 & 1 & 0\\
1 & 0 & 0\\
0&0&0
\end{pmatrix}\delta.
\end{equation*}
having an eigenbasis $\{\ket{\bar0},\ket{\bar1},\ket{\bar2}\}$. Initially the environment is considered to be in the thermal state, $\tau_B=\frac{e^{-\beta H_B}}{\Tr(e^{-\beta H_B})}$, with the inverse temperature $\beta=\dfrac{1}{T} \dfrac{k_B}{\hslash\delta}$.

Note that the total Hamiltonian $H_T=H_S \otimes \mathcal{I}_3+ \mathcal{I}_2 \otimes H_B$ has a non-degenerate eigenspectra. Since the global unitary $U_{SB}$ generating $\mathbb{TO}$ $\Lambda$ must obey the commutation relation $[U_{SB}, H_T]=0$, we choose $U_{SB}$ such that it is diagonal in the eigenbasis of $H_T$,
\begin{equation*}
\begin{split}
    U_{SB}=e^{-\mathbb{I}\alpha_1}\ketbra{0\bar{0}}{0\bar{0}}+e^{-\mathbb{I}\alpha_2}\ketbra{0\bar{1}}{0\bar{1}}+e^{-\mathbb{I}\alpha_3}\ketbra{0\bar{2}}{0\bar{2}}\\+e^{-\mathbb{I}\alpha_4}\ketbra{1\bar{0}}{1\bar{0}}+e^{-\mathbb{I}\alpha_5}\ketbra{1\bar{1}}{1\bar{1}}+e^{-\mathbb{I}\alpha_6}\ketbra{1\bar{2}}{1\bar{2}}.
\end{split}
\end{equation*}
Under perturbation, system's new Hamiltonian becomes $H'_S=\left(\sigma_z+\epsilon \sigma_x\right)\delta$. The modified initial state is now given as $\rho^{\epsilon}_S=a\ketbra{0'}{0'}+(1-a)\ketbra{1'}{1'}$. Where $\{\ket{0'},\ket{1'}\}$ is the eigenbasis of $H'_S$. Thus the total
correlation in unperturbed and the perturbed scenario can be numerically computed using equations~\eqref{eq:5} and~\eqref{eqn:5}. So the  perturbation-induced change in the total correlation is given by $\Delta I=I_\epsilon-I$.

We fix
the parameters of $U_{SB}$ as $\alpha_1=18\times 10^7,\alpha_2=30\times 10^7,\alpha_3=60\times 10^7,\alpha_4=80\times 10^7,\alpha_5=70\times 10^7,\alpha_6=90\times 10^7$, and take $a=0.9$. For this particular example, we plot $\Delta I$ vs environment temperature $T$ in fig.~\ref{fig3}. As it can be seen in the plot that $\Delta I$ is always positive for the concerned range of $T\in[0,1]$ units and it increases as $T$ decreases. The the degree of perturbation is taken as $\epsilon=0.2$.
 Thus for this example, we get an enhancement of non-Markovianity, induced by perturbation. 
\begin{figure}
\vspace{-0.5cm}
  \includegraphics[scale=0.25]{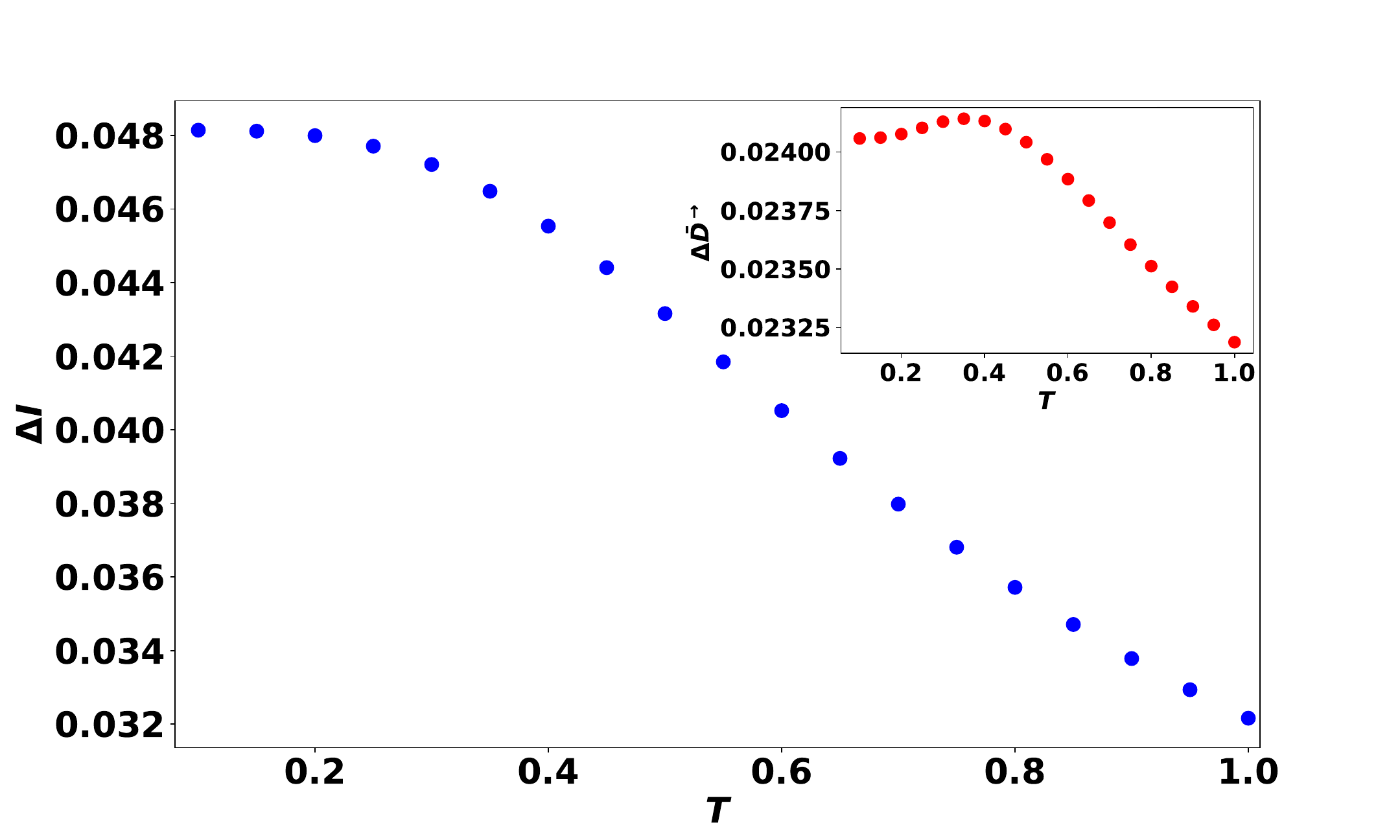} %
\caption{The outer plot depicts the behavior of the difference in the amount of total correlation  $\Delta I=I_{\epsilon}-I$ with the environment temperature $T$, for $\mathbb{TO}$$_\epsilon$ and $\mathbb{TO}$. The inset shows the variation of the difference in quantum discord $\Delta \bar{D}^{\rightarrow}=\bar{D}_{\epsilon}^{\rightarrow}-\bar{D}^{\rightarrow}$, due to $\mathbb{TO}$$_\epsilon$ and $\mathbb{TO}$, with the environment temperature. The perturbation parameter is fixed at $\epsilon=0.2$. From both the plots, it is seen that for all values of $T$ in the concerned range, $\Delta I_N$ and $\Delta \bar{D}^{\rightarrow}$ is positive. This suggests that for the chosen example perturbation enhances total correlation as well as the quantum correlation. The vertical axis is in the units of bits, while the horizontal axis is in units of $\dfrac{\hslash\delta}{k_B}$($\delta$ is the unit of energy).} 
\label{fig3}
\end{figure}
In appendix~\ref{appD}, we show  that the form of the unitary chosen in this example cannot generate entanglement. Thus, we look for quantum correlations other than entanglement which could have been generated due to the perturbation introduced for this particular example. In particular, we calculate the quantum discord that quantifies the total quantum correlation~\cite{PhysRevLett.108.150403} between two systems. The quantum discord is calculated by subtracting the total amount of classical correlations from the total correlation, classical+quantum, present between two systems.

The Henderson-Vedral classical correlation \cite{Henderson_2001} between the subsystems $A$ and $B$ of a bipartite system $AB$ in $\mathbb{C}^{d_1}\otimes \mathbb{C}^{d_2}$ is of the following form,
\begin{equation} C_A\left(\rho_{AB}\right)=\max_{P^{\dagger}_i P_i} \left(S(\rho_B)-\sum_{i} p_i S(\rho^{i}_B)\right), \label{eq:6}
\end{equation} 
where $\rho_{AB},\rho_{A}$ and $\rho_{B}$ are the total state of the system $AB$ and reduced states of the subsystems $A$ and $B$ respectively. The maximization is performed over all POVMs $F_i={P^{\dagger}_i P_i}$, done on system $A$. The probabilities are given by
\begin{equation*}
    p_i=\Tr\left(P_i\otimes \mathcal{I}_{d_2}\rho_{AB}P^{\dagger}_i\otimes \mathcal{I}_{d_2} \right).
\end{equation*}
The states $\rho^{i}_B$ are given by
\begin{equation*}
\rho^{i}_B=\frac{\Tr_A\left(P_i\otimes \mathcal{I}_{d_2} \rho_{AB}P^{\dagger}_i\otimes \mathcal{I}_{d_2} \right)}{\Tr\left(P_i\otimes \mathcal{I}_{d_2} \rho_{AB}P^{\dagger}_i\otimes \mathcal{I}_{d_2} \right)}.
\end{equation*}
As the first term of equation~\eqref{eq:6} does not involve any measurement, it can be re-framed as
\begin{equation}
    C_A\left(\rho_{AB}\right)= S(\rho_B)-\min_{{P^{\dagger}_i P_i}}\sum_{i} p_i S(\rho^{i}_B). \label{eq:8}
\end{equation}
Now the quantum discord is given by
\begin{equation}
     \bar{D}^{\rightarrow}=I-C_A\left(\rho_{AB}\right). \label{eq:9}
\end{equation}
Where $I$ is the total correlation.
We have used $\bar{D}^{\rightarrow}$ to indicate that the measurement has been performed in the first subsystem, as the discord is not symmetric with respect to the measurement in two subsystems in general.
Substituting equations~\eqref{eq:5} and~\eqref{eq:8} in equation~\eqref{eq:9}, we get the following formula for quantum discord:
\begin{equation}
    \bar{D}^{\rightarrow}=S(\rho_A)-S(\rho_{AB})+ \min_{{P^{\dagger}_i P_i}}\sum_{i} p_i S(\rho^{i}_B). \label{eq:10}
\end{equation}
For the particular example as discussed above, we analyze the effect of perturbation on discord.  We consider the system $S$ and the environment $B$ as the first and second subsystems $A$ and $B$ in equation~\eqref{eq:6}. For simplicity, we perform the minimization over all projective measurements, instead of POVM, on the system. In case of qubits, without loss of generality, we can consider the projective measurement elements, $P_1=\ketbra{\psi}{\psi},P_2=\ketbra{\psi^{\perp}}{\psi^{\perp}}$,
where $\ket{\psi}=\cos\frac{\theta}{2}\ket{0}+e^{i\phi}\sin\frac{\theta}{2}\ket{1},\ket{\psi^{\perp}}=-\sin\frac{\theta}{2}\ket{0}+e^{-i\phi}\cos\frac{\theta}{2}\ket{1}$, with $\theta\in[0,\pi]$ and $\phi \in[0,2\pi]$. Minimizing over the parameters $\theta, \phi$ enables us to calculate the quantum discord.

The total system-environment state $\rho_{SB}$, $\rho^\epsilon_{SB}$ and the state of the system $\rho_S$ and $\rho^\epsilon_S$, to be used in the equation~\eqref{eq:10} for calculating the discord are given in equations~\eqref{9} and ~\eqref{rho}, ~\eqref{rhoep} respectively.  The difference in discord $\Delta \bar{D}^{\rightarrow}=\bar{D}_{\epsilon}^{\rightarrow}-\bar{D}^{\rightarrow}$ is plotted with environment temperature $T$, as can be found in inset of fig.~\ref{fig3}. It is clear from the figure that for the particular example chosen the difference $\Delta \bar{D}^{\rightarrow}$ is positive with varying $T$, indicating that perturbation have enhanced quantum correlation between the system and the environment during the evolution. 

In the next subsection we present the third proposition that relates the distance-based measure of non-Markovianity for $\mathbb{TO}$$_\epsilon$ with that of the $\mathbb{TO}$.
\subsection{Relating the distance-based measures of non-Markovianity}
We relate the distance-based measure of non-Markovianity, $D_\Lambda$ and $D^\epsilon_\Lambda$, for a $\mathbb{TO}$ $\Lambda$ and $\mathbb{TO}$$_\epsilon$ $\Lambda_\epsilon$ respectively using the following proposition. The proposition is articulated as follows.
\begin{proposition}
   Let $D_\Lambda$ and $D^{\epsilon}_\Lambda$ be the distance-based measures of non-Markovianity given in equations~\eqref{Eq3} and~\eqref{Eq4}, defined in the space of thermal operations and the space of approximate thermal operations respectively. Then, the difference $\Delta D_\Lambda:=D^{\epsilon}_\Lambda-D_\Lambda$ provides an estimate of the change in the amount of non-Markovianity, caused due to the introduction of perturbations in the system and is upper bounded as,
    \begin{equation}
       \Delta D_\Lambda\leq \frac{\epsilon}{d_1} \max_{\Lambda^M}\left\Vert\chi_\Lambda\right\Vert_1,
    \end{equation}
    where we have,
     \begin{eqnarray}
     \begin{split}
          \chi_\Lambda&= \biggl[\biggl(\Lambda\otimes \mathcal{I}_{d_{1}}-\Lambda^M\otimes\mathcal{I}_{d_{1}}\biggr)\vartheta(H')\biggl],
          \end{split}
      \end{eqnarray}
with
\begin{eqnarray}
    \begin{split}
\vartheta(H')&= \biggl(\sum_{i,j,k\neq i}\frac{\bra{k}H'\ket{i}}{E_i-E_k}\ketbra{ki}{jj}\\&\hspace{-0.4cm}+\sum_{i,j,l\neq j}\frac{\bra{j}H'\ket{l}}{E_j-E_l}\ketbra{ii}{lj}\biggr),
    \end{split}
\end{eqnarray}
that intrinsically depends on the external field $H'$ perturbing the system.
      \end{proposition}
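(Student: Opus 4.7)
\emph{Plan.} My plan is to mirror the structure of the proof of Proposition~1, expanding the perturbed input state via first-order perturbation theory and then identifying $\vartheta(H')$ with a Choi--Jamiolkowski matrix. First, using non-degenerate time-independent perturbation theory I would write $\ket{i'} = (\mathcal{I}_{d_{1}}+\epsilon A)\ket{i} + O(\epsilon^{2})$ with $A = \sum_{k\neq i}\frac{\bra{k}H'\ket{i}}{E_{i}-E_{k}}\ketbra{k}{i}$, so that $\rho^{\epsilon}_{S} = \rho_{S} + \epsilon\,\tilde{\rho} + O(\epsilon^{2})$ with $\tilde{\rho} = A\rho_{S} + \rho_{S}A^{\dagger}$, exactly as in the proof of Proposition~\ref{prop2}. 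Since $\Lambda_{\epsilon}$ and $\Lambda^{M}_{\epsilon}$ are generated by the same energy-preserving unitary $U_{SB}$ and the same thermal bath $\tau_{B}$ as the unperturbed $\Lambda$ and $\Lambda^{M}$, they coincide with these as linear maps on $\mathcal{B}(\mathbb{H}_{S})$. Linearity then gives $(\Lambda_{\epsilon}-\Lambda^{M}_{\epsilon})(\rho^{\epsilon}_{S}) = (\Lambda-\Lambda^{M})(\rho_{S}) + \epsilon\,(\Lambda-\Lambda^{M})(\tilde{\rho}) + O(\epsilon^{2})$.

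Second, I would apply the triangle inequality for the trace norm and the same elementary inequalities $\min_{x}[f+g]\leq \min_{x}f+\max_{x}g$ and $\max_{x}[f+g]\leq \max_{x}f+\max_{x}g$ that were used in Proposition~1 in order to isolate the first-order correction. This yields
\begin{equation*}
D^{\epsilon}_{\Lambda}-D_{\Lambda}\;\leq\;\epsilon\,\max_{\{P_{ij}\}}\max_{\Lambda^{M}}\|(\Lambda-\Lambda^{M})(\tilde{\rho})\|_{1},
\end{equation*}
where the $\max_{\Lambda^M}$ appears because the minimiser of the $\epsilon^{0}$ and $\epsilon^{1}$ pieces need not coincide.

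Third, I would identify $\vartheta(H')$ with the Choi--Jamiolkowski matrix of the trace-preserving (since $A+A^{\dagger}=0$) map $T:\rho\mapsto A\rho+\rho A^{\dagger}$. Setting $\ket{\Omega}=\sum_{j}\ket{jj}$, a direct index calculation gives
\begin{equation*}
\vartheta(H')=(A\otimes\mathcal{I}_{d_{1}})\ketbra{\Omega}{\Omega}+\ketbra{\Omega}{\Omega}(A^{\dagger}\otimes\mathcal{I}_{d_{1}})=(T\otimes\mathcal{I}_{d_{1}})\ketbra{\Omega}{\Omega}.
\end{equation*}
Consequently $\chi_{\Lambda}$ is the Choi matrix of the composite trace-annihilating map $\Psi:=(\Lambda-\Lambda^{M})\circ T$, and the Choi inversion formula gives the compact representation $(\Lambda-\Lambda^{M})(\tilde{\rho})=\Tr_{2}\bigl[\chi_{\Lambda}\,(\mathcal{I}_{d_{1}}\otimes \rho_{S}^{\t})\bigr]$.

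Finally, I would combine contractivity of the partial trace under the trace norm with a H\"older-type estimate adapted to the Choi representation, exploiting the trace-annihilating nature of $\Lambda-\Lambda^{M}$ and the normalization attached to the maximally entangled vector $\ket{\Omega}$, to deduce $\max_{\rho_{S}}\|\Psi(\rho_{S})\|_{1}\leq\tfrac{1}{d_{1}}\|\chi_{\Lambda}\|_{1}$. Substituting this into the second-step inequality and noting that $\chi_{\Lambda}$ is already independent of $\rho_{S}$, only the $\max_{\Lambda^{M}}$ remains, and one recovers $\Delta D_{\Lambda}\leq \frac{\epsilon}{d_{1}}\max_{\Lambda^{M}}\|\chi_{\Lambda}\|_{1}$. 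The step I anticipate to be the real obstacle is this last one: the naive chain $\|\Tr_{2}X\|_{1}\leq\|X\|_{1}$ followed by $\|XY\|_{1}\leq\|X\|_{1}\|Y\|_{\infty}$ together with $\|\rho_{S}^{\t}\|_{\infty}\leq 1$ only delivers $\|\chi_{\Lambda}\|_{1}$ and not $\tfrac{1}{d_{1}}\|\chi_{\Lambda}\|_{1}$, so the sharper prefactor must be extracted either from the anti-Hermitian structure of $A$ (which makes $\Psi$ trace-annihilating) or from a Jamiolkowski-type estimate tailored to the maximally entangled reference, rather than from a generic diamond-norm/Choi-norm inequality.
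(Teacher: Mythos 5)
Your proposal reproduces the paper's calculations faithfully up to its final step: the first-order expansion of the perturbed eigenbasis, the observation that $\Lambda_\epsilon$ and $\Lambda^M_\epsilon$ act as the same linear maps as $\Lambda$ and $\Lambda^M$ (the paper secures this via Appendix~\ref{appC}), the triangle inequality with the $\min_x(f+g)\leq \min_x f+\max_x g$ splitting, and the identification of $\vartheta(H')$ as the (unnormalized) Choi matrix of $T:\rho\mapsto A\rho+\rho A^{\dagger}$ are all exactly the paper's ingredients. But the step you flag as the anticipated obstacle is a genuine gap, and it cannot be bridged the way you hope: the inequality $\max_{\rho_S}\Vert\Psi(\rho_S)\Vert_1\leq\frac{1}{d_1}\Vert(\Psi\otimes\mathcal{I}_{d_1})\ketbra{\Omega}{\Omega}\Vert_1$ is false for general Hermiticity-preserving maps, and neither the anti-Hermiticity of $A$ nor the trace-annihilating character of $\Psi$ rescues it. Consider the qubit map
\begin{equation*}
\Psi_0(\rho)=\bra{0}\rho\ket{0}\left(\ketbra{0}{0}-\ketbra{1}{1}\right),
\end{equation*}
which is Hermiticity-preserving, trace-annihilating, and expressible as a difference of two CPTP maps. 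Its unnormalized Choi matrix is $\left(\ketbra{0}{0}-\ketbra{1}{1}\right)\otimes\ketbra{0}{0}$, with trace norm $2$, so the proposed bound would give $\max_\rho\Vert\Psi_0(\rho)\Vert_1\leq 1$, while $\rho=\ketbra{0}{0}$ yields $\Vert\Psi_0(\rho)\Vert_1=2$. The $1/d_1$ prefactor---which is essentially the whole content of the proposition beyond a triangle inequality---therefore cannot be extracted along your route as sketched.

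The paper avoids this by performing the two optimizations in the opposite order. It first invokes equation~\eqref{choi}---the assertion that $\max_\rho\Vert\Lambda(\rho)-\Lambda^M(\rho)\Vert_1$ \emph{equals} the trace distance of the Choi states, stated with a citation rather than proved---and only then introduces the perturbation, not in a generic input state but in the Choi reference vector $\ket{\Phi_\epsilon}=\frac{1}{\sqrt{d_1}}\sum_i\ket{i'i}$. Expanding $\ketbra{\Phi_\epsilon}{\Phi_\epsilon}$ to first order produces $\frac{1}{d_1}\bigl(\sum_{i,j}\ketbra{ii}{jj}+\epsilon\,\vartheta(H')\bigr)$, so the $1/d_1$ is inherited directly from the normalization of the maximally entangled state; the triangle inequality, the Appendix~\ref{appC} equivalence of the $\mathbb{MTO}$ and $\mathbb{MTO}_\epsilon$ constraints, and the min/max splitting then finish the argument. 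If you grant equation~\eqref{choi}, your derivation closes in one line: you already showed $\chi_\Lambda=(\Psi\otimes\mathcal{I}_{d_1})\ketbra{\Omega}{\Omega}$ with $\ket{\Omega}=\sqrt{d_1}\ket{\Phi}$, so applying~\eqref{choi} to the first-order map $\Psi$ gives $\max_\rho\Vert\Psi(\rho)\Vert_1=\Vert(\Psi\otimes\mathcal{I}_{d_1})\ketbra{\Phi}{\Phi}\Vert_1=\frac{1}{d_1}\Vert\chi_\Lambda\Vert_1$, precisely your missing estimate. Be aware, however, that~\eqref{choi} is itself not a true identity for arbitrary differences of maps---the map $\Psi_0$ above violates it too---so your instinct that the prefactor cannot come from generic diamond-norm/Choi-norm inequalities was correct: in the paper it comes from an assumed equivalence, which your route needs exactly as much as the paper's does.
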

\begin{proof}
%\begin{spacing}{1.5}
Definition of $D_\Lambda$ and $D^\epsilon_\Lambda$ requires two optimizations. The first one is the minimization over the set of $\mathbb{X}^M$ and $\mathbb{X}^M_\epsilon$ respectively, while the second one involves a maximization over the space of input states of the system. Note that the order of these optimization is irrelevant. Hence, we choose to tackle the optimization over the space of input states first. 

As any quantum map can be characterized by its corresponding Choi state~\cite{JAMIOLKOWSKI1972275}, we choose to work in Choi state formalism. We note that a similar approach was taken in ref.~\cite{Das_2021}. In the Choi state formalism, the distance between the final states obtained by the action of two individual maps, maximized over the set of input states is equivalent to the distance between the Choi states of the concerned maps, i.e.
\begin{eqnarray}
\begin{split}
\max_{\rho}\Big[||\Lambda(\rho)-\Lambda^{M}(\rho)||_1\Big]
\hspace{-2.5cm}\\&
=||(\Lambda\otimes\mathcal{I}_{d_{1}}-\Lambda^{M}\otimes \mathcal{I}_{d_1})\ketbra{\Phi}||_1, \label{choi}
   \end{split}
\end{eqnarray}
 Here, $\ket{\Phi}=\frac{1}{\sqrt{d_1}}\sum_{i}\ket{ii}$ is maximally entangled state between the system and an auxiliary of same dimension that constitutes the Choi state.
 Substituting the form of $\ket{\Phi}$ in equation~\eqref{choi}, we get
\begin{equation*}
\begin{split}
   \max_{\rho} ||\Lambda(\rho)-\Lambda^{M}(\rho)||_1
   \hspace{-2.5cm}\\&=\frac{1}{d_1}\biggl[\left(\Lambda\otimes \mathcal{I}_{d_1}-\Lambda^{M}\otimes \mathcal{I}_{d_1}\right)\sum_{i,j}\ketbra{ii}{jj}\biggr].
    \end{split}
    \end{equation*}
    Similarly for the  perturbed situation, we have,
\begin{eqnarray*}
\hspace{-1.5cm}
\begin{split}
\max_{\rho}\Big[||\Lambda_\epsilon(\rho)-\Lambda^{M}_\epsilon(\rho)||_1\Big]
\hspace{-3.5cm}\\&=||\left(\Lambda_\epsilon\otimes\mathcal{I}_{d_{1}}-\Lambda^{M}_\epsilon\otimes \mathcal{I}_{d_1}\right)\ketbra{\Phi_\epsilon}||_1.
   \end{split}
\end{eqnarray*}
   Where $\ket{\Phi_\epsilon}$ is now the maximally entangled state between the perturbed system and the auxiliary, having the form ${\ket{\Phi_\epsilon}}=\frac{1}{\sqrt{d}}\sum_{i}\ket{i'i}$, with
 $\ket{i'}=\ket{i}+\epsilon\sum_{k\neq i}\frac{\bra{k}H'\ket{i}}{E_i-E_k}\ket{k}$,
 using standard non-degenerate perturbation theory. From this expression we get,
\begin{eqnarray*}
\vspace{2cm}
\begin{split}
\max_{\rho}||\Lambda_\epsilon(\rho)-\Lambda^M_\epsilon(\rho)||_1\\&
\hspace{-2.5cm}=\frac{1}{d_1}\bigg\Vert\biggl[\left(\Lambda\otimes \mathcal{I}_{d_{1}}-\Lambda^M\otimes\mathcal{I}_{d_{1}} \right)\sum_{i,j} \ketbra{i'i}{j'j}\biggr]\bigg\Vert.
    \end{split}
\end{eqnarray*}
Expanding the perturbed basis states $\ket{i'}$ and $\bra{j'}$ in terms of unperturbed basis states, it is easy to check that,
 \begin{eqnarray}
 \vspace{1cm}
 \begin{split}
\max_{\rho}||\Lambda_\epsilon(\rho)-\Lambda^M_\epsilon(\rho)||_1
  \hspace{-3.2cm}\\ &=\frac{1}{d_1}\bigg\Vert\biggl[\biggl(\Lambda\otimes \mathcal{I}_{d_{1}}-\Lambda^M\otimes\mathcal{I}_{d_{1}}\biggr)\\&\biggl(\sum_{i,j} \biggl(\ketbra{ii}{jj} +\epsilon\Big(\sum_{i\neq k}\frac{\bra{k}H'\ket{i}}{E_i-E_k}\ketbra{ki}{jj}\\&+\sum_{l\neq j}\frac{\bra{j}H'\ket{l}}{E_j-E_l}\ketbra{ii}{lj}\Big)\biggr)\biggr)\biggr]\bigg\Vert_1+O(\epsilon^{2}). \label{dist1}
    \end{split}
    \end{eqnarray}
   Assuming the perturbation to be very weak, we neglect the higher order terms and retain only those terms up to first order in $\epsilon$. Employing the triangle inequality $||A+B||\leq ||A||+||B||$ in equation~\eqref{dist1}, we get
    \begin{eqnarray}
 \begin{split}
\max_{\rho}||\Lambda_\epsilon(\rho)-\Lambda^M_\epsilon(\rho)||_1
  \hspace{-3cm}\\ &\leq\frac{1}{d_1}\bigg\Vert\biggl[\biggl(\Lambda\otimes \mathcal{I}_{d_{1}}-\Lambda^M\otimes\mathcal{I}_{d_{1}}\biggr)\sum_{i,j} \ketbra{ii}{jj}\bigg\Vert_1  \\&+\frac{\epsilon}{d_1}\bigg\Vert\biggl(\Lambda\otimes \mathcal{I}_{d_{1}}-\Lambda^M\otimes\mathcal{I}_{d_{1}}\biggr)\\&\biggl(\sum_{i,j,k\neq i}\frac{\bra{k}H'\ket{i}}{E_i-E_k}\ketbra{ki}{jj}\\&+\sum_{i,j,l\neq j}\frac{\bra{j}H'\ket{l}}{E_j-E_l}\ketbra{ii}{lj}\biggr)\biggr]\bigg\Vert_1.
  \label{dist2}
    \end{split}
    \end{eqnarray}
 Using equation~\eqref{dist2}, we can write the expression for $D^{\epsilon}_\Lambda$ as,
 \vspace{-0.3cm}
    \begin{eqnarray}
 \begin{split}
D^\epsilon_\Lambda&= \min_{\Lambda^M_\epsilon}||\Lambda_\epsilon(\rho)-\Lambda^M_\epsilon(\rho)||_1
  \hspace{-2.6cm}\\ &\leq\min_{\Lambda^M_\epsilon}\biggl[\frac{1}{d_1}\bigg\Vert\biggl(\Lambda\otimes \mathcal{I}_{d_{1}}-\Lambda^M\otimes\mathcal{I}_{d_{1}}\biggr)\biggl(\sum_{i,j} \ketbra{ii}{jj}\bigg\Vert_1  \\&+\frac{\epsilon}{d_1}\bigg\Vert\biggl(\Lambda\otimes \mathcal{I}_{d_{1}}-\Lambda^M\otimes\mathcal{I}_{d_{1}}\biggr)\\&\biggl(\sum_{i,j,k\neq i}\frac{\bra{k}H'\ket{i}}{E_i-E_k}\ketbra{ki}{jj}\\&+\sum_{i,j,l\neq j}\frac{\bra{j}H'\ket{l}}{E_j-E_l}\ketbra{ii}{lj}\biggr)\bigg\Vert_1\biggr].
  \label{dist3}
 \end{split}
 \end{eqnarray}
 Since the conditions on parameters of global unitaries derived in Sec.~\ref{Conds} are same for the case of $\mathbb{MTO}$ and $\mathbb{MTO}_\epsilon$, under the assumption that the system has a non-degenerate Bohr spectra (see appendix~\ref{appC} for the proof), minimization over $\Lambda^M_\epsilon$ in the first term of the RHS of the inequality~\eqref{dist3} can be equivalently written as minimization over $\Lambda^M$.
 Also, using the inequality
$\min_x\left(f(x)+g(x)\right)\leq \min_x f(x)+\max_x g(x)$, 
 we can simplify further to obtain, 
    \begin{eqnarray}
 \begin{split}
D^\epsilon_\Lambda&= \min_{\Lambda^M_\epsilon}||\Lambda_\epsilon(\rho)-\Lambda^M_\epsilon(\rho)||_1
 \\ &\hspace{-0.4cm}\leq\min_{\Lambda^M}\biggl[\frac{1}{d_1}\bigg\Vert\biggl(\Lambda\otimes \mathcal{I}_{d_{1}}-\Lambda^M\otimes\mathcal{I}_{d_{1}}\biggr)\biggl(\sum_{i,j} \ketbra{ii}{jj}\bigg\Vert_1\biggr]  \\&\hspace{-0.4cm}+\max_{\Lambda^M}\biggl[\frac{\epsilon}{d_1}\bigg\Vert\biggl(\Lambda\otimes \mathcal{I}_{d_{1}}-\Lambda^M\otimes\mathcal{I}_{d_{1}}\biggr)\\&\biggl(\sum_{i,j,k\neq i}\frac{\bra{k}H'\ket{i}}{E_i-E_k}\ketbra{ki}{jj}\\&+\sum_{i,j,l\neq j}\frac{\bra{j}H'\ket{l}}{E_j-E_l}\ketbra{ii}{lj}\biggr)\bigg\Vert_1\biggr].
 \label{dist4}
 \end{split}
 \end{eqnarray} 
The first term in the inequality~\eqref{dist4} is nothing but $D_\Lambda$, the distance based measure of non-Markovianity corresponding to $\mathbb{TO}$. Thus we have 
\begin{equation}
D^{\epsilon}_\Lambda - D_\Lambda\leq \frac{\epsilon}{d_1} ||\chi||_1.  
\end{equation}
\vspace{-0.5cm}
This completes the proof for the above proposition.
\end{proof}

Unlike the cases of the entanglement and correlation-based measures, where we could obtain examples depicting the enhancement of non-Markovianity under perturbation in the system's Hamiltonian, the following example considering the distance-based measure shows that not always an enhancement in the non-Markovianity can be obtained. The amount of non-Markovianity in the unperturbed and perturbed case in this particular example are almost equal.\\

\noindent \textbf{Example.} Consider the composite system-environment state in $\mathbb{C}^2\otimes\mathbb{C}^2$. i.e. the system as well as the environment is taken to be qubit. The unperturbed system Hamiltonian is given by $H_S=\sigma_z\delta$. The initial state of the system, in the absence of perturbations, is written in the eigenbasis of $H_S$ as $\rho_S=a\ketbra{0}{0}+(1-a)\ketbra{1}{1}$. The environment Hamiltonian is taken as $H_B=\omega\sigma_z\delta$, with $\omega=10$ and the environment is considered to be in the Gibbs state $\tau_B=\frac{e^{-\beta \sigma_z}}{\Tr(e^{-\beta \sigma_z})}$,  with the inverse temperature $\beta=\dfrac{1}{T} \dfrac{k_B}{\hslash\delta}$. The eigenbasis of both $H_S$ and $H_B$ is given by $\{\ket{0},\ket{1}\}$.   We fix the environment temperature to $T=100$ units. 

The global unitary $U_{SB}$ must commute with the total system-environment Hamiltonian $H_T$, if it has to generate a $\mathbb{TO}$. So $U_{SB}$ can be diagonalized in the eigenbasis of $H_T$ as,
\begin{equation*}
\begin{split}
    U_{SB}=e^{-\mathbb{I}\alpha_1}\ket{00}\bra{00}+e^{-\mathbb{I}\alpha_2}\ket{11}\bra{11}\\+e^{-\mathbb{I}\alpha_3}\ket{01}\bra{01}+e^{-\mathbb{I}\alpha_4}\ket{10}\bra{10}.
\end{split}
\end{equation*}

For the unperturbed case, the distance is calculated as
\begin{eqnarray*}
\begin{split}
 D&=\min_{\Lambda^M} ||\left[\left(\Lambda\otimes \mathcal{I}\right)-\left(\Lambda^M\otimes \mathcal{I}\right)\right]\ketbra{\Phi}{\Phi} ||_1,\\   
 &=\frac{1}{2}\min_{\Lambda^M}\biggl[\sum_{i,j=0}^{1}\left(\Lambda-\Lambda^M\right)\ketbra{i}{j}\otimes \ketbra{i}{j}\biggr].
\end{split}
\end{eqnarray*}

Where $\ket{\Phi}=\dfrac{1}{\sqrt{2}}\left(\ket{00}+\ket{11}\right)$ is the maximally entangled state between the system and an auxiliary of same dimension. The action of $\Lambda$ and $\Lambda^M$ are given as $\Lambda\ketbra{i}{j}=\Tr_{B}\left(U_{SB}\ketbra{i}{j}\otimes \tau_BU_{SB}^{\dagger}\right)$ and $\Lambda^M\ketbra{i}{j}=\Tr_{B}\left(U_{SB}^M\ketbra{i}{j}\otimes \tau_BU_{SB}^{M\dagger}\right)$. Here, $U_{SB}^M$ is the global unitary generating the $\mathbb{MTO}$ $\Lambda^M$. The form of $U_{SB}^M$ is given as,
\begin{equation*}
\begin{split}
    U^M_{SB}=e^{-\mathbb{I}\alpha_1'}\ket{00}\bra{00}+e^{-\mathbb{I}\alpha_2'}\ket{11}\bra{11}\\+e^{-\mathbb{I}\alpha_3'}\ket{01}\bra{01}+e^{-\mathbb{I}\alpha_4'}\ket{10}\bra{10}.
\end{split}
\end{equation*}
The parameters of $U_{SB}^M$ are constrained due to the condition of Markovianity.
The constraints are derived by demanding, 
\begin{equation}
U^M_{SB}\rho_S \otimes \tau_B U_{SB}^{M\dagger}=\rho'_S \otimes \tau_B, 
\label{Ma}
\end{equation}
with $\rho'_S=\Tr_B\left(U^M_{SB}\rho_S \otimes \tau_B U_{SB}^{M\dagger}\right)$. 
Putting the form of $U_{SB}^M$ in equation~\eqref{Ma}, the condition on the parameters of $U^M_{SB}$ are obtained as,
\begin{equation}
e^{\mathbb{I}(\alpha_4' -\alpha_1')}=e^{\mathbb{I}(\alpha_2' -\alpha_3')}.  
\label{con}
\end{equation}
Next we calculate $D_\Lambda$ by minimizing over $\mathbb{MTO}$ which entails minimizing over the parameters of the corresponding global unitary $U_{SB}^M$, subject to the condition~\eqref{con}. The parameters $\alpha_k$, with $k=1,2,3,4.$ corresponding to $\mathbb{TO}$ are fixed to $\alpha_1=10^4,\alpha_2=2\times10^4,\alpha_3=3\times10^4,\alpha_4=4\times10^4$. 

For the perturbed case, system's new Hamiltonian is taken to be $H'_S=\left(\sigma_z+\epsilon \sigma_x\right)\delta$.  The initial state of the system is now given as $\rho^{\epsilon}_S=a\ketbra{0'}{0'}+(1-a)\ketbra{1'}{1'}$, where $\{\ket{0'},\ket{1'}\}$ forms the eigenbasis of $H'_S$. The distance-based measure under perturbation is thus,
\begin{equation*}
D_\epsilon=\min_{\Lambda^M} ||\left[\left(\Lambda\otimes \mathcal{I}\right)\\-\left(\Lambda^M\otimes \mathcal{I}\right)\right]\ketbra{{\Phi
_\epsilon}}{\Phi_\epsilon}||_1,
\end{equation*}
Where $\ket{\Phi
_\epsilon}=\dfrac{1}{\sqrt{2}}\left(\ket{0'0}+\ket{1'1}\right)$.  
Here also, we minimize over the parameters of $U_{SB}^M$ subject to the condition,\begin{equation}
U^M_{SB}\rho_S^\epsilon \otimes \tau_B U_{SB}^{M\dagger}=\rho'^\epsilon_S \otimes \tau_B, 
\label{Ma1}
\end{equation}
with $\rho'^\epsilon_S=\Tr_B\left(U^M_{SB}\rho_S \otimes \tau_B U_{SB}^{M\dagger}\right)$.
We again obtain the condition~\eqref{con}.  So, the minimization over the parameters is done such that condition~\eqref{con} is satisfied. We have used numerical non-linear optimization algorithm for computing $D$ and $D_\epsilon$. 
We find that the difference $\Delta D$ is of the order of $10^{-4}$ and less for $\epsilon\in(0.01,0.1)$. Hence, considering the distance-based measure, we get a counter example, which depicts that a perturbation in system's Hamiltonian does not always guarantee an  enhancement in the non-Markovianity.

\section{Conclusion and Discussion}\label{conc}
When the ideal conditions attributed to thermal operation ($\mathbb{TO}$) is not completely satisfied, the resulting operation can be defined to be athermal. An approximate thermal operation ($\mathbb{TO}$$_\epsilon$) is one such athermal operation resulting from the violation of condition of total energy conservation, that is an essential feature of any $\mathbb{TO}$. For a given $\mathbb{TO}$, initial perturbation in the system's Hamiltonian often leads to such violation. 

%Next, recalling that non-Markovianity of quantum operations have often proved to be a vital resource.
In this work, we discussed the concept of non-Markovianity in the context of both thermal operations ($\mathbb{TO}$) and the corresponding approximate thermal operations ($\mathbb{TO}$$_\epsilon$). Subsequently we proposed three measures of non-Markovianity applicable for both the types of operations. Our first measure is an entanglement-based measure, that calculates the entanglement generated between the system and the environment under the application of $\mathbb{TO}$ and  $\mathbb{TO}$$_\epsilon$. The second measure is based on the the total correlation generated between the system and the environment by these operations. Lastly, the third measure is distance-based, whereby we compute the distance between the final states obtained by the application of any given $\mathbb{TO}$/  $\mathbb{TO}$$_\epsilon$ to that of the final states obtained upon the action of Markovian thermal operations ($\mathbb{MTO}$)/ approximate Markovian thermal operations ($\mathbb{MTO}_\epsilon$), minimized over all such $\mathbb{MTO}$/$\mathbb{MTO}_\epsilon$, and maximized over all choice of initial states of the system.
 
Our main aim was to analyze how athermality affects non-Markovianity. We related each of the measures of non-Markovianity for a given $\mathbb{TO}$ to its corresponding $\mathbb{TO}$$_\epsilon$. In particular, for the distance-based and entanglement-based measures, we derived upper bounds on the difference of the perturbation-induced non-Markovianity to that of the unperturbed case, considering weak perturbation. We found that the quantities appearing in the bounds implicitly depend on the degree of perturbation $\epsilon$, the characteristic unitary generating the operations and the perturbing Hamiltonian. In case of total correlation-based measure, we were able to exactly compute the difference. 

We presented examples whereby we calculated the difference of the perturbation-induced non-Markovianity to that of the unperturbed case. In our analysis, we took example of states in $\mathbb{C}^2\otimes\mathbb{C}^2$ to calculate the entanglement, and states in $\mathbb{C}^2\otimes\mathbb{C}^3$ for computing the total correlation. For these two cases, we were able to obtain a positive value of the difference, which illustrates that athermality may enhance non-Markovianity. 

The importance of our results lies in the fact that non-Markovianity is a key resource in quantum information science and it is essential to probe its behaviour with disturbances that are unavoidable. Here we demonstrated that even with such disturbances, one may get an amplification of the non-Markovianity. 
\vspace{-0.4cm}
%\textcolor{red}{entanglement based measure we took example of states in $\mathbb{C}^2\otimes\mathbb{C}^2$. Meaning the system and environment both are qubits. While for total correlation based measure we took states in $\mathbb{C}^2\otimes\mathbb{C}^3$, i.e the system is a qubit and the environment is a qutrit.  Such examples  situation similar to the scenario of celebrated order from disorder case.}
\section{Acknowledgment}
We acknowledge the use of QIClib – a C++ library for quantum information and computation (\url{https://titaschanda.github.io/QIClib}). 
\appendix
\section{Condition on parameters of unitaries for transformation of diagonal elements under $\mathbb{MTO}$}\label{appA}
   % Here we present the proof of equation (\ref{eq:1}).
Let $U_{SB}$ denote the unitary corresponding to a $\mathbb{MTO}$.  
    Transformation of diagonal elements under $\mathbb{MTO}$ is given as
    \begin{equation*}
        U_{SB} \ketbra{i}{i}\otimes \tau_B U_{SB}^{\dagger}=\rho'_S \otimes \tau_B.
    \end{equation*}
Since the unitaries are energy-preserving, we can identify their action on blocks of fixed energies and write
  \begin{equation*}
  \begin{split}
      U_{SB} \ketbra{i}{i}\otimes \tau_B U_{SB}^{\dagger}&=\oplus_{E,E'} U_E \sum_{E_R} P(E_R)\\&\ketbra{i}{i}\otimes \ketbra{E_R}{E_R}  U^{\dagger}_{E'},\\
      &= \sum_{E_R} P(E_R)\oplus_{E,E'} U_E \\&\ketbra{i}{E_R} \otimes \ketbra{i}{E_R}  U^{\dagger}_{E'},
  \end{split}
    \end{equation*}
where the environment state $\tau_B=\sum_{E_R} P(E_R)\ketbra{E_R}{E_R}$ is written in the environment energy basis. 

The action of unitaries on unperturbed system and environment states is given by
\begin{equation*}
    U_E \ket{i}\ket{E_R} =\sum_{j} \alpha_{E_R}^{ji}\ket{j}\ket{E_R+E_i-E_j} \delta_{E,E_R+E_i}.
\end{equation*}
Using this and writing $E_i-E_j=\omega_{ji}$, we get
\begin{equation*}
\begin{split}
    U_{SB}\ketbra{i}{i} \otimes \tau_B U_{SB}^{\dagger}&=\oplus_{E,E'} \sum_{E_R} P(E_R)\sum_{j} |\alpha_{E_R}^{ji}|^{2}\ket{j}\\&\ket{E_R+\omega_{ji}} \otimes \bra{j} \bra{E_R+\omega_{ji}} \\& \delta_{E,E_R+E_i}\delta_{E',E_R+E_i},\\
    &=\sum_{E_R} P(E_R) \sum_{j}|\alpha_{E_R}^{ji}|^{2}\\&\ket{j} \ket{E_R+\omega_{ji}} \otimes \bra{j}\bra{E_R+\omega_{ji}}.
\end{split}
\end{equation*}
Tracing out the environment, we obtain system's new state as
\begin{equation*}
    \rho'_S=\sum_{E_R} P(E_R)\sum_{j}|\alpha_{E_R}^{ji}|^{2} \ketbra{j}{j}.
\end{equation*}
Using the form of $\rho'_S$, the condition for Markovianity can be written as
\begin{equation*}
\begin{split}
   \sum_{E_R} P(E_R) \sum_{j}|\alpha_{E_R}^{ji}|^{2}\ketbra{j}{j}\otimes \ketbra{E_R+\omega_{ji}}{E_R+\omega_{ji}}\\=\sum_{E_R} P(E_R)\sum_{j}|\alpha_{E_R}^{ji}|^{2}\ketbra{j}{j}\otimes \sum_{E_R} P(E_R)\ketbra{E_R}{E_R}. 
\end{split}
\end{equation*}
Writing $P(i \rightarrow j)=\sum_{E_R}P(E_R)|\alpha_{E_R}^{ji}|^{2}$, we have
\begin{equation*}
\begin{split}
 \sum_{E_R} P(E_R) \sum_{j}|\alpha_{E_R}^{ji}|^{2}\ketbra{j}{j}\otimes \ketbra{E_R+\omega_{ji}}{E_R+\omega_{ji}}\\=\sum_{j}P(i \rightarrow j)\ketbra{j}{j}\otimes \sum_{E_R} P(E_R)\ketbra{E_R}{E_R}.   
\end{split}
\end{equation*}
At this point, let us make a variable transformation
\begin{equation*}
  E_R \rightarrow \bar{E}_R=E_R+\omega_{ji}. 
\end{equation*}
So the condition, in terms of this new variable becomes
\begin{equation*}
\begin{split}
  \sum_{\bar{E}_R} P(\bar{E}_R-\omega_{ji}) \sum_{j}|\alpha_{\bar{E}_R-\omega_{ji}}^{ji}|^{2}\ketbra{j}{j}\otimes \ketbra{\bar{E}_R}{\bar{E}_R} \\=\sum_{j}P(i \rightarrow j)\ketbra{j}{j}\otimes \sum_{E_R} P(E_R)\ketbra{E_R}{E_R}.   
\end{split}
\end{equation*}
Since $\ketbra{j}{j} \otimes \ketbra{E_R}{E_R}$s are linearly independent, by equating their coefficients for each $j,E_R$, we get
\begin{equation*}
    |\alpha_{\bar{E}_R-\omega_{ji}}^{ji}|^{2}P(\bar{E}_R-\omega_{ji})=P(E_R)P(i \rightarrow j).
\end{equation*}
Making a variable transformation $\bar{E}_R \rightarrow E_R$, we arrive at the required condition
\begin{equation}
    |\alpha_{E_R}^{ji}|^{2}= \frac{P(E_R+\omega_{ji})P(i \rightarrow j)}{P(E_R)}.
\end{equation}
This completes the proof of equation \eqref{eq:1}.
\section{Condition on parameters of unitaries for transformation of off-diagonal elements under $\mathbb{MTO}$}\label{appB}
    %Here we present the proof of equation (\ref{eq:2}).
Transformation of off-diagonal elements under $\mathbb{MTO}$ is given as
\begin{equation*}
  U_{SB} \ketbra{i}{j}\otimes \tau_B U_{SB}^{\dagger}=\rho'_S \otimes \tau_B. 
\end{equation*}
Proceeding in the same way as in appendix \ref{appA}, we get
\begin{equation*}
\begin{split}
    U_{SB}\ketbra{i}{j}\otimes \tau_B U_{SB}^{\dagger}=\sum_{E_R} P(E_R) \sum_{k,l}\alpha_{E_R}^{ki}(\alpha_{E_R}^{lj})^{*}\\\ket{k} \ket{E_R+\omega_{ki}} \otimes \bra{l} \bra{E_R+\omega_{lj}}.
\end{split}
\end{equation*}
The system's new state is obtained by tracing the environment
\begin{equation*}
\begin{split}
    \rho'_S=\sum_{E_R} P(E_R)\sum_{j}|\alpha_{E_R}^{ji}|^{2}\ketbra{j}{j}\\ \braket{E_R+\omega_{ki}}{E_R+\omega_{lj}}.
\end{split}
\end{equation*}
The inner product $\braket{E_R+\omega_{ki}}{E_R+\omega_{lj}}\neq 0 $ iff $\omega_{ki}=\omega_{lj}$. This leads to  $E_i-E_j=E_k-E_l$. For Hamiltonians with a non-degenerate Bohr spectra, $E_i-E_j\neq E_k-E_l$ for $i\neq k, j\neq l$. So we must have $E_i=E_k, E_j=E_l$. This gives $\omega_{ki}=E_i-E_k=0,\omega_{lj}=E_j-E_l=0$.

The condition of Markovianity can now be written as
\begin{equation*}
\begin{split}
    \sum_{E_R} P(E_R) \alpha_{E_R}^{ii}(\alpha_{E_R}^{jj})^{*}\ketbra{i}{j} \otimes \ketbra{E_R}{E_R}\\
=\sum_{E_R} P(E_R)\alpha_{E_R}^{ii}(\alpha_{E_R}^{jj})^{*}\ketbra{i}{j}\otimes \sum_{E_R} P(E_R)\ketbra{E_R}{E_R}.
\end{split}
\end{equation*}
Writing $\Lambda_{ij}=\sum_{E_R}P(E_R)\alpha_{E_R}^{ii}(\alpha_{E_R}^{jj})^{*}$, we have
\begin{eqnarray*}
\begin{split}
      \sum_{E_R} P(E_R) \alpha_{E_R}^{ii}(\alpha_{E_R}^{jj})^{*}\ketbra{i}{j} \otimes \ketbra{E_R}{E_R}\\&\hspace{-6cm}=\Lambda_{ij}\ketbra{i}{j}\otimes \sum_{E_R} P(E_R)\ketbra{E_R}{E_R}.
\end{split}
\end{eqnarray*}
Equating the coefficients of $\ketbra{i}{j} \otimes \ketbra{E_R}{E_R}$ for each index $i,j,E_R$, we obtain the condition
\begin{equation}
    \alpha_{E_R}^{ii}({\alpha_{E_R}^{jj}}^{*})= \Lambda_{ij}.
\end{equation}
This completes the proof of equation \eqref{eq:2}.
%\vspace{-1.5cm}
\section{Condition on parameters of unitaries for transformation of elements under $\mathbb{MTO}_\epsilon$}\label{appC}
Here we prove that the constraints on the parameters of unitaries in case of $\mathbb{MTO}_\epsilon$ are the same as the constraints obtained in case of $\mathbb{MTO}$, where we consider only those systems whose Hamiltonians possess a non-degenerate Bohr spectra. Let us consider only the diagonal elements of the system's perturbed density matrix. It can be shown that the same set of conditions are obtained from the off-diagonal elements also. 

The condition of Markovianity implies that
\begin{equation*}
    U_{SB} \ketbra{i'}{i'}\otimes U_{SB}^{\dagger}=\rho'_S\otimes \tau_B.
\end{equation*}
The action of the unitaries on the perturbed states can be obtained by writing the perturbed basis states in terms of unperturbed basis states and using the fact that the unitaries commute with the unperturbed Hamiltonian and hence act on blocks of fixed energy (corresponding to the unperturbed Hamiltonian). Then we have
\begin{equation*}
\begin{split}
   \bigoplus_{E,E'} U_E\biggl(\biggl[\ket{i}+\epsilon \sum_{j\neq i} \frac{\bra{j}H'\ket{i}}{E_i-E_j} \ket{j} \biggr]\\\biggl[\bra{i}+\epsilon \sum_{j\neq i} \frac{\bra{i}H'\ket{j}}{E_i-E_j} \bra{j}\biggr]\\\otimes \sum_{E_R} p\left(E_R\right) \ketbra{E_R}{E_R}\biggr) U^{\dagger}_{E'}=\rho'_S\otimes \tau_B.  
\end{split}
\end{equation*}
We consider first order perturbation theory. Considering feeble perturbation, we keep terms up to first order in $\epsilon$. Using the notations and calculations from appendices \ref{appA} and \ref{appB}, the above expression can be straightforwardly written as 
\begin{equation*}
\begin{split}
 \sum_{E_R} p\left(E_R\right) \sum_{p}|\alpha^{pi}_{E_R}|^{2} \ketbra{p}{p} \otimes \ketbra{E_R+\omega_{pi}}{E_R+\omega_{pi}} \\+\epsilon\biggl[\sum_{j \neq i} \frac{1}{E_i-E_j}\biggl(\bra{i}H'\ket{j} \sum_{p,q}\alpha_{E_R}^{pi}\left(\alpha_{E_R}^{qj}\right)^{*} \ketbra{p}{q}\\\otimes \ketbra{E_R+\omega_{pi}}{E_R+\omega_{qj}}+\bra{j}H'\ket{i} \sum_{p,q}\alpha_{E_R}^{qj}\\\left(\alpha_{E_R}^{pi}\right)^{*} \ketbra{q}{p}\otimes \ketbra{E_R+\omega_{qj}}{E_R+\omega_{pi}}\biggr)\biggr]\\=\biggl[\sum_{j} p(i \rightarrow j)\ketbra{j}{j}+\epsilon \sum_{j \neq i}\frac{1}{E_i-E_j}\\\biggl(\Lambda_{ij}\bra{i}H'\ket{j} \ketbra{i}{j} +\Lambda_{ji}\bra{j}H'\ket{i} \ketbra{j}{i}\biggr)\biggr]\\\otimes \sum_{E_R} p\left(E_R\right) \ketbra{E_R}{E_R},   
\end{split}
\end{equation*}
under the condition $p=i,q=j$ which makes $\omega_{pi}=E_i-E_p=0$, $\omega_{qj}=E_j-E_q=0$. This simplifies the above equation to
\begin{equation*}
    \begin{split}
         \sum_{E_R} p\left(E_R\right) \sum_{j} |\alpha^{ji}_{E_R}|^{2} \ketbra{j}{j} \otimes \ketbra{E_R+\omega_{ji}}{E_R+\omega_{ji}}\\+\epsilon\biggl[\sum_{j \neq i} \frac{1}{E_i-E_j}\biggl(\bra{i}H'\ket{j} \alpha_{E_R}^{ii}\left(\alpha_{E_R}^{jj}\right)^{*}\ketbra{i}{j}\otimes \ketbra{E_R}{E_R}\\+\bra{j}H'\ket{i}\alpha_{E_R}^{jj}\left(\alpha_{E_R}^{ii}\right)^{*}\ketbra{j}{i}\otimes \ketbra{E_R}{E_R}\biggr)\biggr]\\=\sum_{j} p(i \rightarrow j)\ketbra{j}{j}\otimes \ketbra{E_R}{E_R}
    \end{split}
\end{equation*}
\begin{equation*}
    \begin{split}
        +\epsilon \sum_{j \neq i}\frac{1}{E_i-E_j}\biggl(\Lambda_{ij}\bra{i}H'\ket{j} \ketbra{i}{j}\otimes \sum_{E_R} p\left(E_R\right) \ketbra{E_R}{E_R}\\+\Lambda_{ji}\bra{j}H'\ket{i} \ketbra{j}{i}\otimes \sum_{E_R} p\left(E_R\right) \ketbra{E_R}{E_R}\biggr)
    \end{split}
\end{equation*}

Equating the coefficients of $\ketbra{i}{j}\otimes \ketbra{E_R}{E_R}$, we get the first condition,
\begin{equation}
   \alpha_{E_R}^{ii}({\alpha_{E_R}^{jj}}^{*})= \Lambda_{ij}. \label{C1}
\end{equation}
Now let us make a variable change $E_R+\omega_{ji}=\bar{E_R}$. Using this and equating the coefficients of $\ketbra{j}{j}\otimes \ketbra{E_R}{E_R}$,
\begin{equation*}
    p\left(\bar{E_R}-\omega_{ji}\right)\bigg\vert\alpha^{ji}_{\bar{E_R}-\omega_{ji}}\bigg\vert^{2}=p\left(E_R\right)p(i \rightarrow j).
\end{equation*}
By making the transformation $\bar{E}_R \rightarrow E_R$, we get the second condition
\begin{equation}
    |\alpha_{E_R}^{ji}|^{2}= \frac{P(E_R+\omega_{ji})P(i \rightarrow j)}{P(E_R)}.\label{C2}
\end{equation}
Equations~\eqref{C1} and \eqref{C2} are identical to the equations~\eqref{eq:1} and ~\eqref{eq:2}. This proves our assertion.
\vspace{-1cm}
\section{Unitaries generating $\mathbb{TO}$, diagonal in computational basis cannot create entanglement in $\mathbb{C}^2\otimes\mathbb{C}^2$ and $\mathbb{C}^2\otimes\mathbb{C}^3$  }\label{appD}
Suppose the system and the environment are equipped with the respective Hilbert spaces $\mathbb{C}^{d_1}$ and $\mathbb{C}^{d_2}$. Let $\{\ket{i}\}_{i=0}^{d_1-1}$ be the eigenbasis of the system Hamiltonian $H_S$ and $\{\ket{E_R}\}_{E_R=0}^{d_2-1}$ denote the eigenbasis of the environment Hamiltonian $H_B$. Then an eigenbasis of the total Hamiltonian $H_T=H_S \otimes \mathcal{I}_B+\mathcal{I}_S \otimes H_B$ is given by $\{\ket{iE_R}\}$. We consider this basis as the computational basis. An unitary $U_{SB}$ diagonal in this basis would commute with $H_T$ and hence be a generator of $\mathbb{TO}$. However, we show that such an unitary would not be able to create any entanglement between the initially uncorrelated system and the environment in $\mathbb{C}^2\otimes\mathbb{C}^2$ and $\mathbb{C}^2\otimes\mathbb{C}^3$ cases. 

 Let us first consider the $\mathbb{C}^2\otimes\mathbb{C}^2$ case.  Consider an arbitrary $2\otimes2$ matrix in the basis $\{\ket{0},\ket{1}\}$, 
\begin{equation}
\rho_S=\begin{pmatrix}
a & b \\
c & d
\end{pmatrix}.  
\end{equation}
Also, consider an arbitrary $2\times2$ diagonal matrix in the basis $\{\ket{0},\ket{1}\}$,
\begin{equation}
\tau_B=\begin{pmatrix}
e & 0 \\
0 & f
\end{pmatrix}.  
\end{equation}
Let $U_{SB}$ be any arbitrary matrix diagonal in the basis $\{\ket{00},\ket{01},\ket{10},\ket{11}\}$. It can be written as,
\begin{equation}
    U_{SB}=\begin{pmatrix}
    u_1 & 0& 0& 0 \\
0 & u_2 & 0& 0\\
0 & 0 & u_3& 0\\
0 & 0 & 0& u_4  
    \end{pmatrix}.
\end{equation}
After application of $U_{SB}$, the matrix $\rho_S\otimes\tau_B$ becomes,
\begin{equation*}
    \rho_{SB}=U_{SB}\rho_S\otimes\tau_B U_{SB}^{\dagger}.
\end{equation*}
Using the form for $\rho_S$ and $\tau_B$ and simplifying, we can write the expression for $\rho_{SB}$ as
\begin{equation*}
    \rho_{SB}=\begin{pmatrix}
    u_1ae u^*_1 & 0& u_1be u^*_3& 0 \\
0 & u_2afu^*_2 & 0& u_2bfu^*_4\\
u_3ce u^*_1 & 0 & u_3de u^*_3& 0\\
0 & u_4cfu^*_2 & 0& u_4dfu^*_4  
    \end{pmatrix}.
\end{equation*}
The above expression can be cast in a general form as
\begin{equation*}
\rho_{SB}=\begin{pmatrix}
A_1& 0 &A_2&0\\
0& A_3 &0&A_4\\
A_5& 0 &A_6&0\\
0& A_7 &0&A_8
\end{pmatrix}.   
\end{equation*}
Taking partial transpose, we get
\begin{equation*}
\rho_{SB}^{{\Gamma}_A}=\begin{pmatrix}
A_1& 0 &A_5&0\\
0& A_3 &0&A_7\\
A_2& 0 &A_6&0\\
0& A_4 &0&A_8\\
\end{pmatrix}.   
\end{equation*}
The characteristic equation for $\rho_{SB}$ is obtained from the condition
\begin{equation*}
    \begin{vmatrix}
A_1-\lambda & 0 & A_2&0\\
0&A_3-\lambda &0& A_4\\
A_5 &0& A_6-\lambda & 0\\
0&A_7 &0&A_8-\lambda
\end{vmatrix}=0,
\end{equation*}
from which we obtain
\begin{equation*}
\begin{split}
     \{(A_1-\lambda)(A_6-\lambda)-A_2A_5\}\\\{(A_3-\lambda)(A_8-\lambda)-A_4A_7\}=0.
\end{split}
\end{equation*}
Similarly, the characteristic equation for $\rho_{SB}^{\Gamma_A}$ is obtained from the condition
\begin{equation*}
  \begin{vmatrix}
A_1-\lambda & 0 & A_5&0\\
0&A_3-\lambda &0 &A_7\\
A_2 &0& A_6-\lambda & 0\\
0&A_4 &0&A_8-\lambda
\end{vmatrix}  =0,
\end{equation*}
whereby
\begin{equation*}
\begin{split}
     \{(A_1-\lambda)(A_6-\lambda)-A_2A_5\}\\\{(A_3-\lambda)(A_8-\lambda)-A_4A_7\}=0.
\end{split}
\end{equation*}
It can be seen that both $\rho_{SB}$ and $\rho_{SB}^{\Gamma_A}$ have same characteristic equation and hence same set of eigenvalues.
Since this is true for arbitrary matrices $\rho_{SB}$, this is true in particular when $\rho_{SB}$ is a density matrix, describing the composite system-environment state. In that case, all the eigenvalues of $\rho_{SB}$ must be non-negative. Thus, $\rho_{SB}^{\Gamma_A}$ also has all non-negative eigenvalues. Now, the PPT criterion~\cite{PhysRevLett.77.1413,HORODECKI19961} states that any quantum state in $\mathbb{C}^2\otimes\mathbb{C}^2$ and $\mathbb{C}^2\otimes\mathbb{C}^3$ is separable if and only if it has a positive partial transpose. Thus, using the PPT criterion, we conclude that $\rho_{SB}$ is separable. 

We next consider the $\mathbb{C}^2\otimes\mathbb{C}^3$ case. 
 As before, we have,
\begin{equation*}
\rho_S=\begin{pmatrix}
a & b \\
c & d
\end{pmatrix},  
\end{equation*}
in the basis $\{\ket{0},\ket{1}\}$ and
\begin{equation*}
\tau_B=\begin{pmatrix}
e & 0 &0\\
0 & f&0\\
0&0&g
\end{pmatrix},  
\end{equation*}
in the basis $\{\ket{0},\ket{1},\ket{2}\}$.

An arbitrary diagonal matrix $U_{SB}$ can be written in the basis $\{\ket{00},\ket{01},\ket{02},\ket{10},\ket{11},\ket{12}\}$ as, 
\begin{equation*}
    U_{SB}=\begin{pmatrix}
    u_1 & 0& 0& 0&0&0 \\
0 & u_2 & 0& 0&0&0\\
0 & 0 & u_3& 0&0&0\\
0 & 0 & 0& u_4&0&0\\
0&0&0&0&u_5&0\\
0&0&0&0&0&u_6
    \end{pmatrix}.
\end{equation*}

Repeating the same procedure as in $\mathbb{C}^2\otimes\mathbb{C}^2$ case, 
\begin{equation*}
\begin{split}
\rho_{SB}&=U_{SB}\rho_S\otimes\tau_B U_{SB}^{\dagger},\\
   &=\begin{pmatrix}
A_1& 0 &0&A_2&0&0\\
0& A_3 &0&0&A_4&0\\
0&0&A_5& 0 &0&A_6\\
 A_7 &0&0&A_8&0&0\\
 0& A_9 &0&0&A_{10}&0\\
 0&0&A_{11}& 0 &0&A_{12}
\end{pmatrix},  
\end{split}
\end{equation*}
expressed in a general form.

Eigenvalues of $\rho_{SB}^{\Gamma_A}$ and $\rho_{SB}$ can be shown to be equal. So this holds true in the case where $\rho_{SB}$ represents a density matrix. Using the similar arguments as in the previous case, $\rho_{SB}$ is separable by the PPT criterion.

%\vspace{2cm}
\bibliography{refe}
\end{document}